\DeclareMathOperator*{\argmin}{arg\,min}
\DeclareMathOperator*{\argmax}{arg\,max}
\newtheorem{theorem}{Theorem}
\newtheorem{property}{Property}
\begin{document}
\title{Optimal Caching for Low Latency in Distributed Coded Storage Systems}
\author{Kaiyang~Liu,~\IEEEmembership{Member,~IEEE,}
        ~Jun~Peng,~\IEEEmembership{Member,~IEEE,}
        ~Jingrong~Wang,~\IEEEmembership{Student Member,~IEEE,}
		and~Jianping~Pan,~\IEEEmembership{Senior Member,~IEEE}
		\IEEEcompsocitemizethanks{\IEEEcompsocthanksitem K. Liu is with the School of Computer Science and Engineering, Central South University, Changsha 410075, China, and also with the Department of Computer Science, University of Victoria, Victoria, BC V8W 2Y2, Canada. E-mail: liukaiyang@uvic.ca.
        \IEEEcompsocthanksitem J. Peng is with the School of Computer Science and Engineering, Central South University, Changsha 410075, China. E-mail: pengj@csu.edu.cn.
        \IEEEcompsocthanksitem J. Wang is with the Department of Electrical and Computer Engineering, University of Toronto, Toronto, ON M5S 1A1, Canada. E-mail: jr.wang@mail.utoronto.ca.
		\IEEEcompsocthanksitem J. Pan is with the Department of Computer Science, University of Victoria, Victoria, BC V8W 2Y2, Canada. E-mail: pan@uvic.ca.
}
}
\markboth{IEEE/ACM Transactions on Networking}%
{Shell \MakeLowercase{\textit{et al.}}: Bare Demo of IEEEtran.cls for IEEE Journals}
\maketitle
\begin{abstract}
Erasure codes have been widely considered a promising solution to enhance data reliability at low storage costs.
However, in modern geo-distributed storage systems, erasure codes may incur high data access latency as they require data retrieval from multiple remote storage nodes.
This hinders the extensive application of erasure codes to data-intensive applications.
This paper proposes novel caching schemes to achieve low latency in distributed coded storage systems.
Experiments based on Amazon Simple Storage Service confirm the positive correlation between the latency and the physical distance of data retrieval.
The average data access latency is used the performance metric to quantify the benefits of caching.
Assuming that the future data popularity and network latency information is available, an offline caching scheme is proposed to find the optimal caching solution.
Guided by the optimal scheme, an online caching scheme is proposed according to the measured data popularity and network latency information in real time.
Experiment results demonstrate that the online scheme can approximate the optimal scheme well with dramatically reduced computation complexity.
\end{abstract}

\begin{IEEEkeywords}
distributed storage systems, erasure codes, optimal caching
\end{IEEEkeywords}

\IEEEpeerreviewmaketitle

\section{Introduction}

\IEEEPARstart{I}n the big data era, the world has witnessed the explosive growth of data-intensive applications.
IDC predicts that the volume of global data will reach a staggering 175 Zettabytes by 2025~\cite{IDC_18}.
Modern distributed storage systems, e.g., Amazon Simple Storage Service (S3)~\cite{AmazonS3}, Google Cloud Storage~\cite{GoogleCloud}, and Microsoft Azure~\cite{Azure}, use two redundancy schemes, i.e., data replication and erasure codes, to enhance data reliability.

By creating full data copies at storage nodes near end users, data replication can reduce the data service latency with good fault tolerance performance~\cite{Liu_DataBot_19,Annamalai_Akkio_18}.
However, it suffers from high bandwidth and storage costs with the growing number of data replicas.
With erasure codes, each data item is coded into data chunks and parity chunks.
Compared with replication, erasure codes can lower the bandwidth and storage costs by an order of magnitude while with the same or better level of data reliability~\cite{Hu_SoCC17,EC_Store}.
Nevertheless, in geo-distributed storage systems, erasure codes may incur high access latency since 1) end users need to contact multiple remote storage nodes to access data~\cite{Agar_17}, and 2) the decoding process with parity chunks incurs non-negligible computation overheads.
The high latency prevents the further application of erasure codes to data-intensive applications, limiting its use to rarely-accessed archive data~\cite{RE_Store}.

As supplements to the geo-distributed storage system, major content providers, e.g., Akamai and Google, deploy frontend servers to achieve low latency~\cite{PANDO_20}.
End users issue requests to their nearest frontend servers, which have cached a pool of popular data items.
Nevertheless, data caching faces critical challenges in the distributed coded storage system.
As data items are coded into data chunks and parity chunks, the caching scheme should determine which chunks to cache for each data item.
To serve end users across the globe, spreading the coded chunks of each data item at more storage nodes can lower the latencies of geographically dispersed requests~\cite{PANDO_20}.
The latency of fetching different chunks varies as the coded chunks may be placed at geographically diverse nodes.
Since the data request latency is determined by the slowest chunk retrieval, caching more chunks may not proportionally reduce the overall latency.
Traditional caching schemes at the data item level may not enjoy the full benefits of caching~\cite{Sprout}.

In this paper, we propose optimal offline and near-optimal online caching schemes that are specifically designed for the distributed coded storage systems.
Preliminary experiment results in Sec.~\ref{subsec:caching} and~\ref{subsec:Motivation} show that a positive correlation exists between the latency and the physical distance of data retrieval over the wide area network (WAN).
For any two geographically diverse storage nodes, the latency gap of accessing the same data item keeps fairly stable.
The average data access latency is used as the performance metric to quantify the benefits of caching.
Assuming that the future data popularity and latency information is available, the proposed optimal scheme explores the ultimate performance of caching on latency reduction with theoretical guarantees.

Although theoretically sound in design, the optimal scheme faces the challenges of long running time and large computation overheads when applied to a large-scale storage system.
Guided by the optimal scheme, an online caching scheme is proposed with no assumption about future data popularity and network latencies.
Based on the measured data popularity and network latencies in real time, the caching decision is updated upon the arrival of each request, without completely overriding the existing caching decisions.
The theoretical analysis provides the worst-case performance guarantees of the online scheme.
The main contributions in this paper include:

\begin{itemize}
\item Novel optimal offline and near-optimal online caching schemes at the frontend servers are designed with performance guarantees for the distributed coded storage system.

\item The proposed caching schemes are extended and implemented to the case of storage server failure.

\item A prototype of the caching system is built based on Amazon S3. Extensive experiment results show that the online scheme can approximate the optimal scheme well with dramatically reduced computation overheads.

\end{itemize}

The rest of this paper is organized as follows.
Sec.~\ref{sec: related} summarizes the related work.
Sec.~\ref{sec:Modeling} presents the model of the distributed coded storage system and states the caching problem.
Sec.~\ref{sec: Optimal_caching} and~\ref{sec: Online_caching} provide the design of the optimal and online caching schemes, respectively.
Sec.~\ref{sec:Evaluation} evaluates the efficiency and performance of the caching schemes through extensive experiments.
Sec.~\ref{sec:conclusion} concludes this paper and lists future work.

\section{Related Work} \label{sec: related}


\textbf{Caching at the data item level.} Data caching has been considered a promising solution to achieve low latency in distributed storage systems.
Ma et al.~\cite{Ma_13} proposed a replacement scheme that cached a full copy of data items to reduce the storage and bandwidth costs while maintaining low latencies.
Liu et al.~\cite{DistCache} designed DistCache, a distributed caching scheme with provable load balancing performance for distributed storage systems.
Song et al.~\cite{Song_NSDI_20} proposed a learning-based scheme to approximate the Belady's optimal replacement algorithm with high efficiency~\cite{Belady}.
However, all these previous studies focus on caching at the data item level.
Due to the limited cache capacity, keeping a full copy of data items may not be space efficient to achieve the full benefits of caching with erasure codes.

\textbf{Low latency in coded storage systems.} Erasure codes have been extensively investigated in distributed storage systems as it can provide space-optimal data redundancy.
However, it is still an open problem to quantify the accurate service latency for coded storage systems~\cite{Sprout}.
Therefore, recent studies have attempted to analyze the latency bounds based on queuing theory~\cite{Sprout,MDSqueue,Xiang_TNSM_17,Badita_TIT_19}.
These researches are under the assumption of stable request arrival process and exponential service time distribution, which may not be applicable for a dynamic network scenario.
Moreover, prior works also focused on the design of data request scheduling schemes to achieve load balancing in coded storage systems~\cite{Hu_SoCC17,EC_Store,Aggarwal_17}.
In this way, the data access latency could be reduced by the avoidance of data access collision.
These scheduling schemes are suitable for intra-data center storage systems as the network congestion dominates the overall data access latency.

\textbf{Caching in coded storage systems.} Compared with schemes that cache the entire data items, Aggarwal et al.~\cite{Sprout} pointed out that caching partial data chunks had more scheduling flexibility.
Then, a caching scheme based on augmenting erasure codes was designed to reduce the data access latency.
Nevertheless, extra storage overheads were introduced with the augmented scheme.
Halalai et al.~\cite{Agar_17} designed Agar, a dynamic programming-based caching scheme to achieve low latency in coded storage systems.
Agar was a static policy that pre-computed a cache configuration for a certain period of time without any worst-case performance guarantees.
Rashmi et al.~\cite{ECCache_16} applied an online erasure coding scheme on data stored in the caching layer to achieve load balancing and latency reduction.
However, this online erasure coding scheme introduced extra computation overheads when handling massive data read requests.
Different from previous studies, our proposed caching schemes leverage the measured end-to-end latency to quantify the benefits of caching.
Furthermore, the proposed schemes only cache data chunks rather than parity chunks to avoid the decoding overheads of data read requests.

\begin{figure}[!t]
\centerline{\includegraphics[width=3.3in]{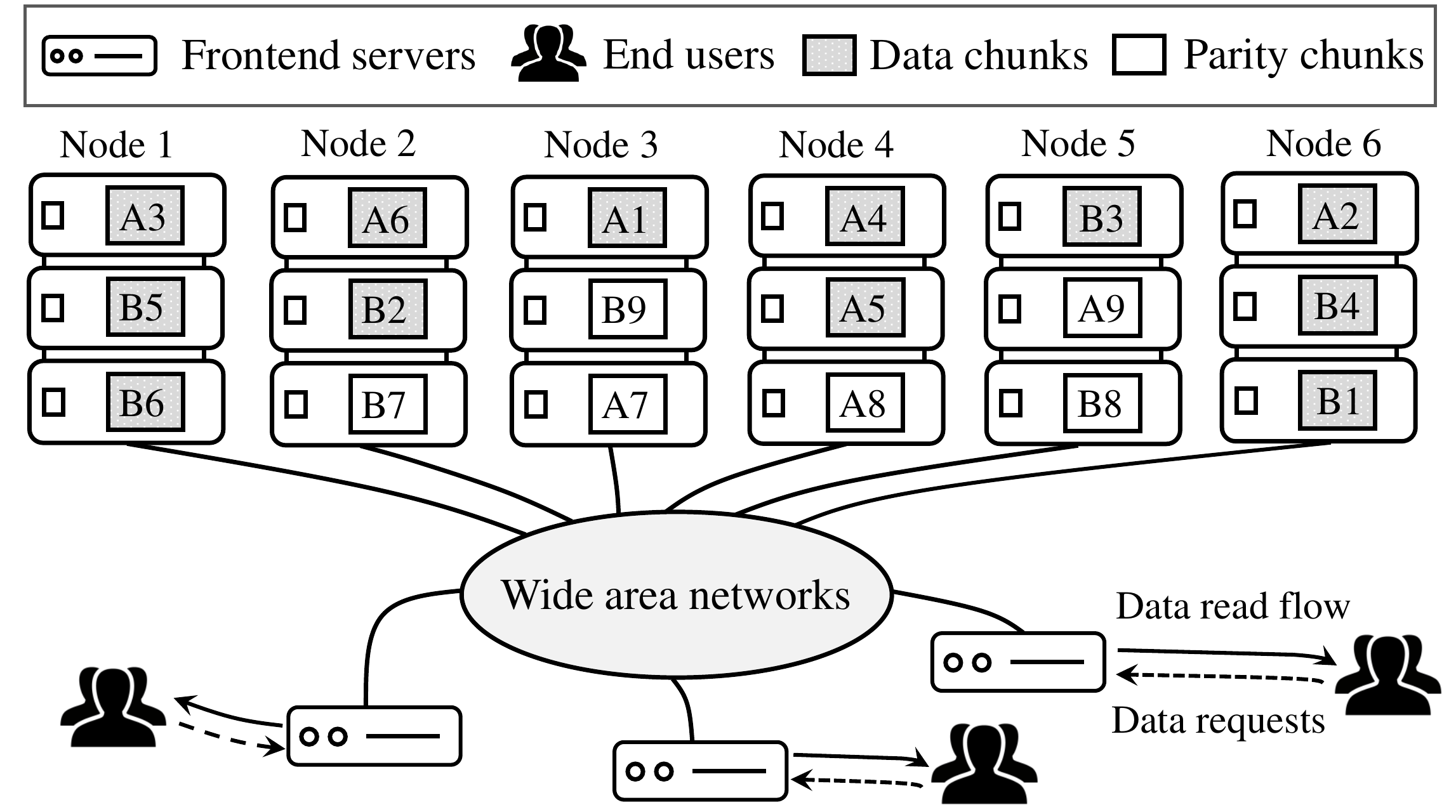}}
\caption{An illustration of the distributed coded storage system with caching services is shown. Data items A and B are coded into $K = 6$ data chunks and $R = 3$ parity chunks.}
\label{fig:System_Model}
\end{figure}

\section{System Model and Problem Statement} \label{sec:Modeling}

This section presents the architecture of the geo-distributed storage system with erasure codes and discusses how to reduce the access latency of data requests with caching services.

\subsection{Geo-distributed Storage System and Erasure Codes} \label{subsec:system}

\begin{table*}[!t] \scriptsize
\caption{The deployment of storage nodes over six Amazon Web Services (AWS) regions and the average data access latency (in milliseconds) from remote storage nodes to end users at three different locations.}
\begin{center}
\begin{tabular}{|c|c|c|c|c|c|c|c|c|c|}
\hline
\multicolumn{2}{|c|}{\textbf{Storage node}} & \textbf{1} & \textbf{2} & \textbf{3} & \textbf{4} & \textbf{5} & \textbf{6} \\
\hline
\multicolumn{2}{|c|}{\textbf{Region}} & Tokyo & Ohio & Ireland & S\~{a}o Paulo & Oregon & Northern California \\
\hline
\multirow{3}*{\textbf{Average latency (ms)}} & \textbf{Victoria, CA} & 479.3 & 345.5 & 686.3 & 803.9 & 128.3 & 179.3 \\
\cline{2-8}
& \textbf{San Francisco, US} & 513.2 & 338.4 & 663.2 & 786.9 & 158.3 & 84.7 \\
\cline{2-8}
& \textbf{Toronto, CA} & 794.7 & 129.0 & 631.5 & 705.5 & 302.6 & 355.7 \\
\cline{2-8}
\hline
\end{tabular}
\label{tab:AWS_Regions}
\end{center}
\end{table*}

As shown in Fig.~\ref{fig:System_Model}, the geo-distributed storage system consists of a set of geographically dispersed storage nodes $\mathcal{N}$ (with size $N=|\mathcal{N}|$)~\footnote{The storage nodes could be data centers in practice. As shown in Fig.~\ref{fig:System_Model}, each storage node consists of multiple storage servers.}.
The set of data items stored in the system is denoted by $\mathcal{M}$ (with size $M=|\mathcal{M}|$).
Similar to Hadoop~\cite{HDFS} and Cassandra~\cite{Cassandra}, all data items are with a default block size.
To achieve the target reliability with the maximum storage efficiency, the Reed-Solomon (RS) codes are adopted as the storage scheme~\footnote{Other erasure codes, e.g., local reconstruction codes (LRC)~\cite{LRC}, can also be applied in our solution.}.
With a linear mapping process, each data item is coded into equal-sized $K$ data chunks and $R$ parity chunks.
All coded chunks are distributed among storage nodes for fault tolerance, which are denoted by
\begin{equation} \label{equ:InitialPlacement}
\left\{\begin{matrix}
    m_k \rightarrow i, k \in \{1,...,K\}, i \in \mathcal{N},
    \\
    m_r \rightarrow j, r \in \{1,...,R\}, j \in \mathcal{N},
\end{matrix}\right.
\end{equation}
which represents chunk $m_k$ and $m_r$ are placed at node $i$ and $j$, respectively~\footnote{In this paper, data replication at the remote storage nodes is not considered to achieve low storage overheads.
Our design is also applicable to the scenario of data replication. The data request is served by fetching $K$ data chunks from the nearest storage nodes.}.
Please note that the coded chunks are not placed at a single storage node since this will increase the data access latency of end users far from that node.

When the requested data is temporarily unavailable, the original data can be recovered via the decoding process from any $K$ out of $K+R$ chunks.
The decoding process with parity chunks will inherently incur considerable computation overheads to the storage system.
Generally speaking, a read request is first served by obtaining $K$ data chunks to reconstruct the original data with low overheads~\cite{Hu_SoCC17}.
The action of parity chunk retrieval for decoding is defined as {\bf degraded read}.
The degraded read is passively triggered 1) when the storage server storing the data chunks is momentarily unavailable, or 2) during the recovery of server failure.
Moreover, the data write/update process is not considered in this paper, since most storage systems are append-only where all the data are immutable. Instead, data with any updates will be considered as separate items with new timestamps~\cite{Hu_SoCC17}.

Erasure codes may incur high data access latency in the geo-distributed storage system.
The requested chunks are retrieved by accessing multiple remote storage nodes.
The high latency impedes the extensive application of erasure codes to data-intensive applications.
Therefore, it is imperative to reduce the data request latencies in the coded storage system.

\subsection{Caching at Frontend Servers for Low Latency} \label{subsec:caching}

This paper adopts caching to achieve low latency data services.
As illustrated in Fig.~\ref{fig:System_Model}, multiple frontend servers are deployed to serve geographically dispersed end users.
Each frontend server creates an in-memory caching layer to cache popular data items near end users.
Instead of interacting directly with remote storage nodes, end users retrieve data from the frontend server.
Let $C$ denote the cache capacity of the frontend server.
Due to the scarcity of memory, not all data chunks can be cached in the caching layer, i.e., $C \leq M K$.

With erasure codes, we may not need to cache all chunks of each data item to achieve the full benefits of caching.
This can be demonstrated through preliminary experiments based on Amazon S3.
As shown in Fig.~\ref{fig:System_Model}, a prototype of the coded storage system is deployed over $N = 6$ Amazon Web Services (AWS) regions, i.e., Tokyo,
Ohio, Ireland, Sao Paulo, Oregon, and Northern California.
In each AWS region, three {\tt buckets} are created.
Each {\tt bucket} represents a server for remote data storage.
The storage system is populated with $M = 1,000$ data items.
The numbers of data and parity chunks are set as $K = 6$ and $R = 3$.
The default size of all chunks is 1 MB~\cite{Agar_17}.
For each data item, the nine coded chunks are uniformly distributed among eighteen {\tt buckets} to achieve load balancing.
The coded chunks of each data item are not placed at the same server to guarantee the $R$-fault tolerance.
As noted in prior work~\cite{ECCache_16,Hu_SoCC17}, the popularity of data items follows a Zipf distribution.

Furthermore, three frontend servers are deployed near the end users at various locations, i.e., Victoria, Canada, San Francisco, United States, and Toronto, Canada.
{\tt Memcached}~\cite{Memcached} module is adopted for data caching in RAM.
The frontend server uses a thread pool to request data chunks in parallel.
For each read request, the end user needs to obtain all six data chunks from remote {\tt buckets} without caching at the frontend server.
Table~\ref{tab:AWS_Regions} shows the average data access latency from remote {\tt buckets} to end users~\footnote{Compared with the long access latency over WAN (in hundreds of milliseconds), the reconstruction latency with data chunks and the network latency from the frontend server to end users is negligible.}.
For data requests, the latency is determined by the slowest chunk retrieval among all chunks.
As shown in Fig.~\ref{fig:System_Model}, if data item B (including data chunk B1--B6) is requested from the frontend server in Victoria, the latency is about 479.3 ms as we need to fetch data chunk B5 and B6 from the farthest storage node in Tokyo.

\begin{figure}[htbp]
\centerline{\includegraphics[width=2.7in]{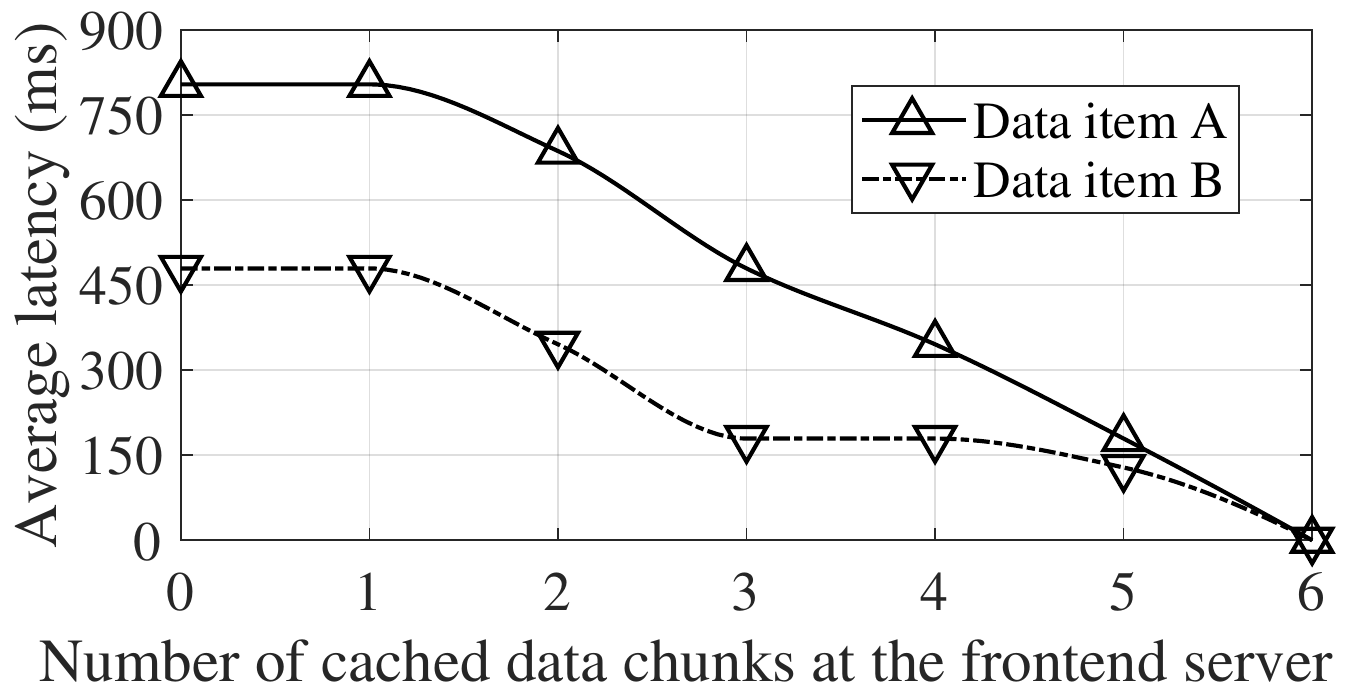}}
\caption{Experiment results show the average access latency of caching different number of data chunks on the frontend server in Victoria. The relationship between the number of cached data chunks and the reduced latency is nonlinear. The storage locations of data items A and B are shown in Fig.~\ref{fig:System_Model}.}
\label{fig:Latency_Example}
\end{figure}

\begin{figure*}[!t]
	\centering
	\subfigure[Latencies from Victoria]{
		\begin{minipage}[b]{0.31\textwidth}
			\includegraphics[width=1\textwidth]{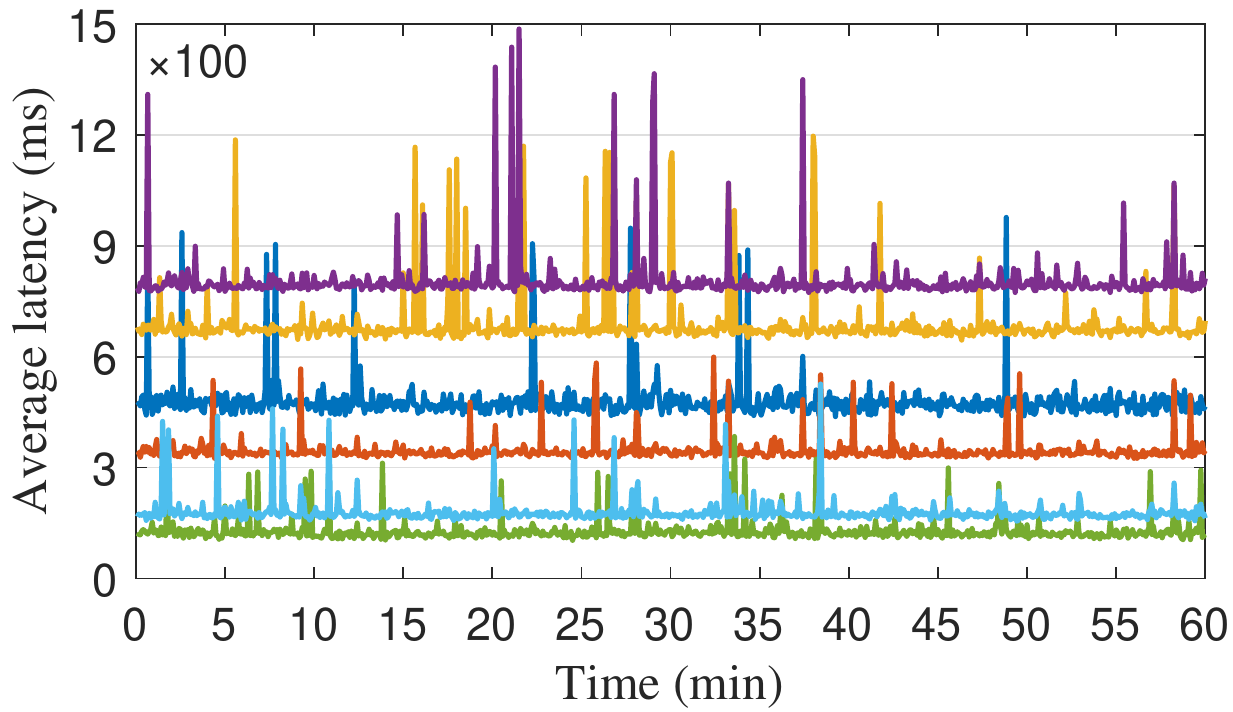}
		\end{minipage}
	}
	\subfigure[Latencies from San Francisco]{
		\begin{minipage}[b]{0.31\textwidth}
			\includegraphics[width=1\textwidth]{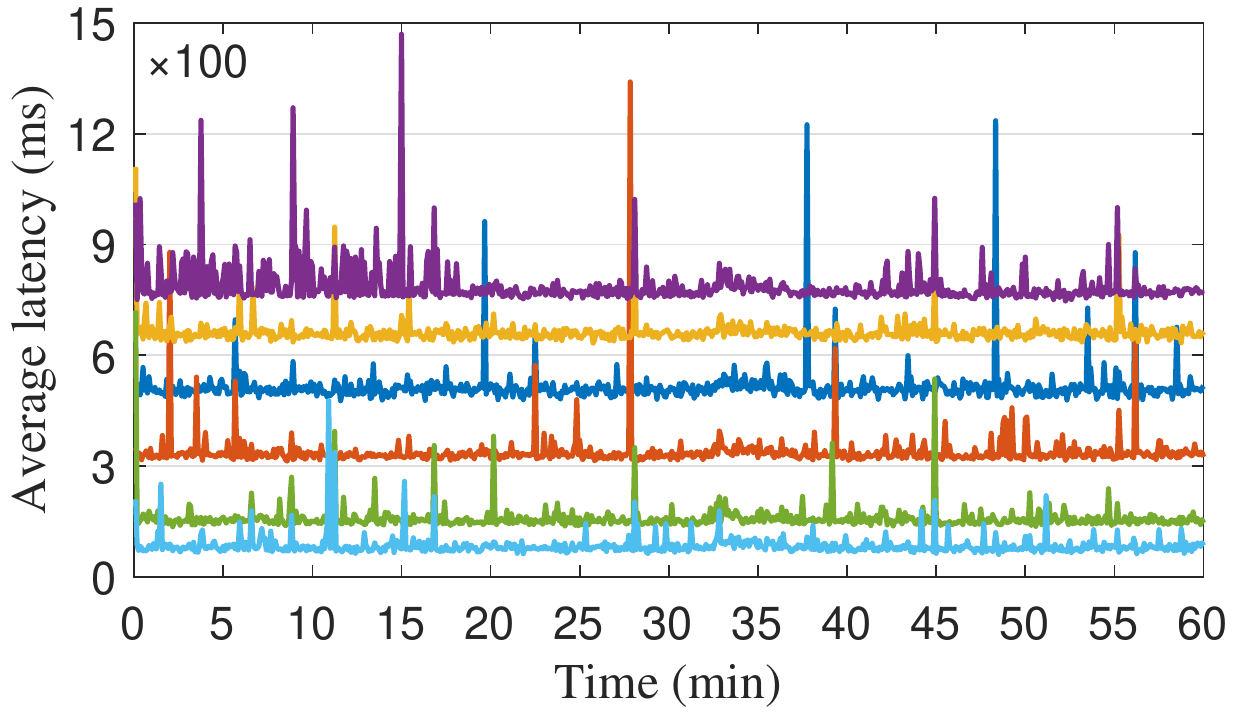}
		\end{minipage}
	}
	\subfigure[Latencies from Toronto]{
		\begin{minipage}[b]{0.31\textwidth}
			\includegraphics[width=1\textwidth]{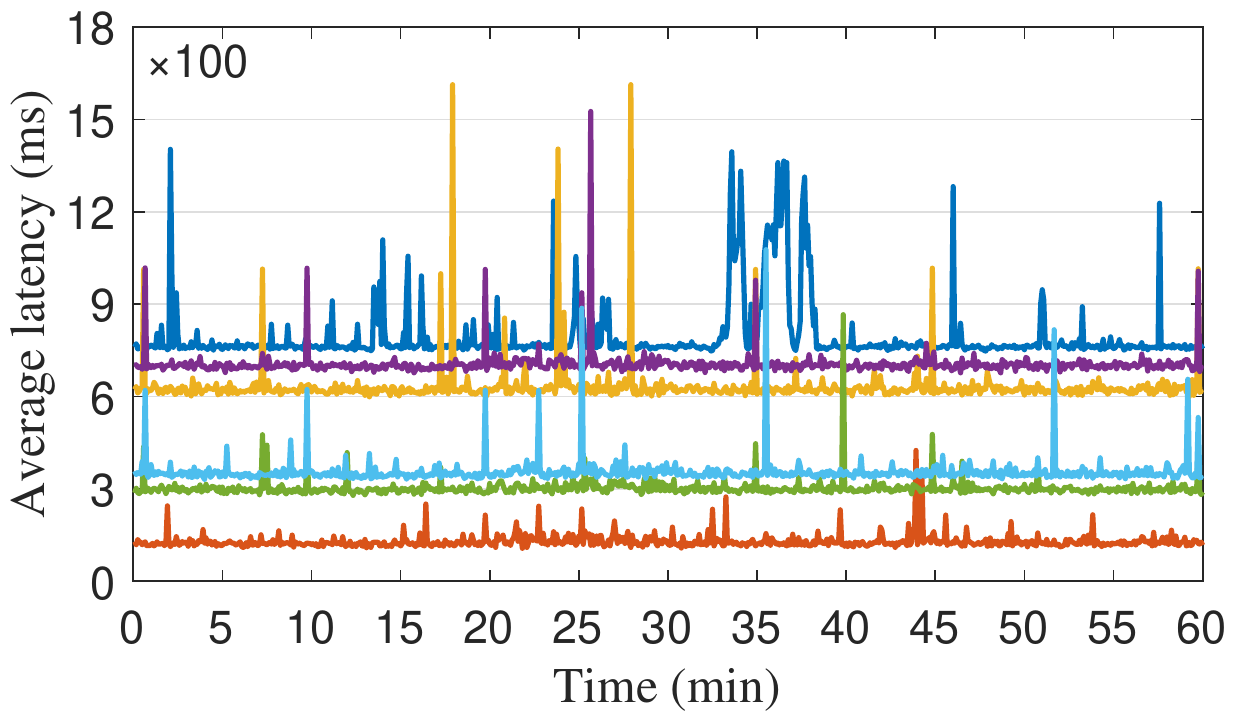}
		\end{minipage}
	}
	\subfigure[CDF of latencies from Victoria]{
		\begin{minipage}[b]{0.31\textwidth}
			\includegraphics[width=1\textwidth]{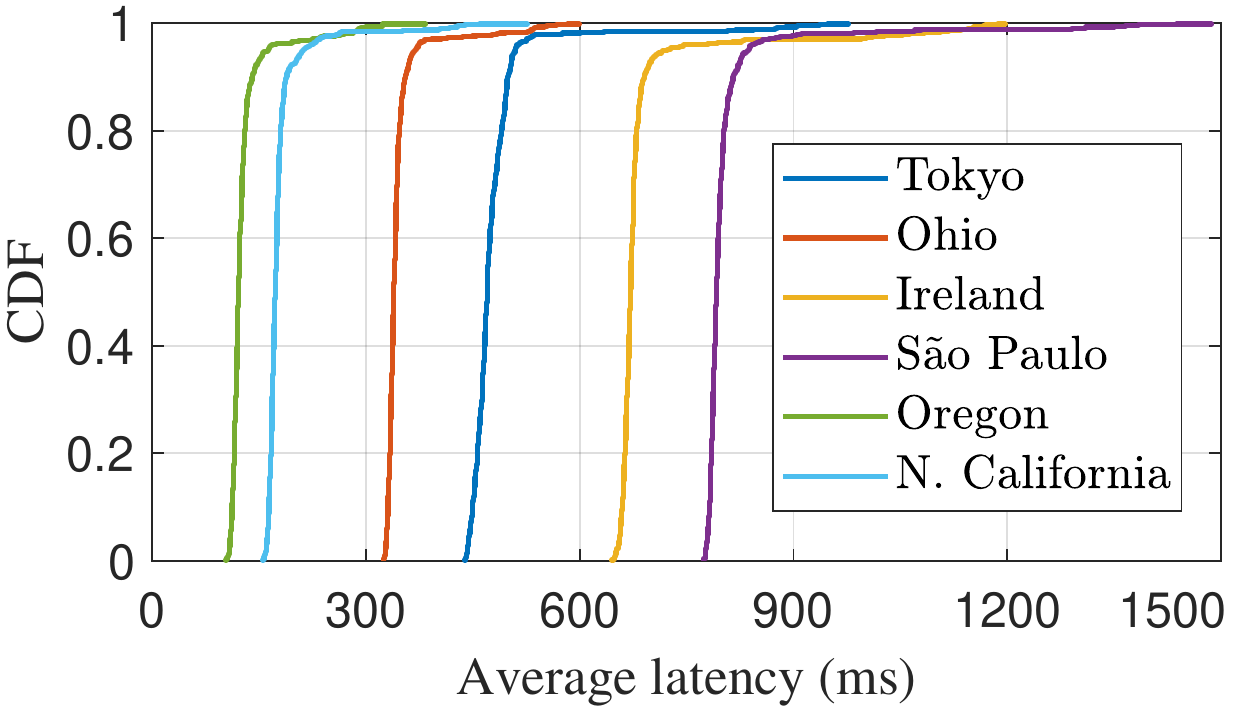}
		\end{minipage}
	}
	\subfigure[CDF of latencies from San Francisco]{
		\begin{minipage}[b]{0.31\textwidth}
			\includegraphics[width=1\textwidth]{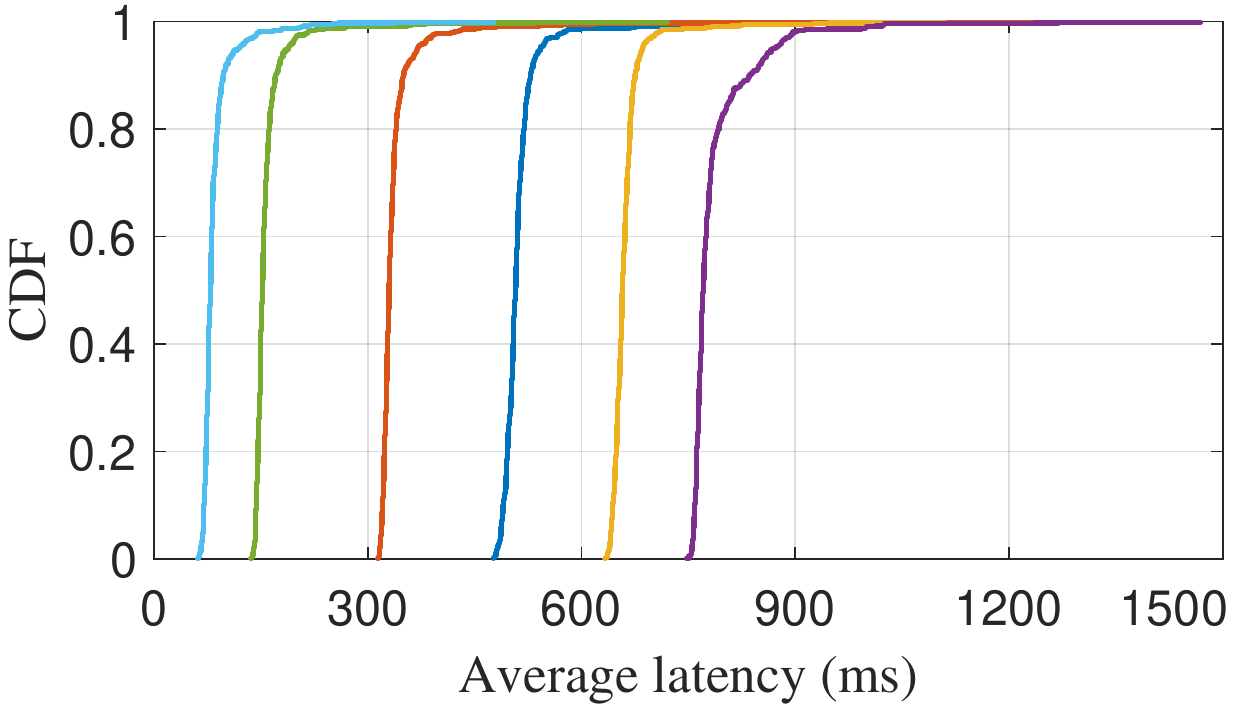}
		\end{minipage}
	}
	\subfigure[CDF of latencies from Toronto]{
		\begin{minipage}[b]{0.31\textwidth}
			\includegraphics[width=1\textwidth]{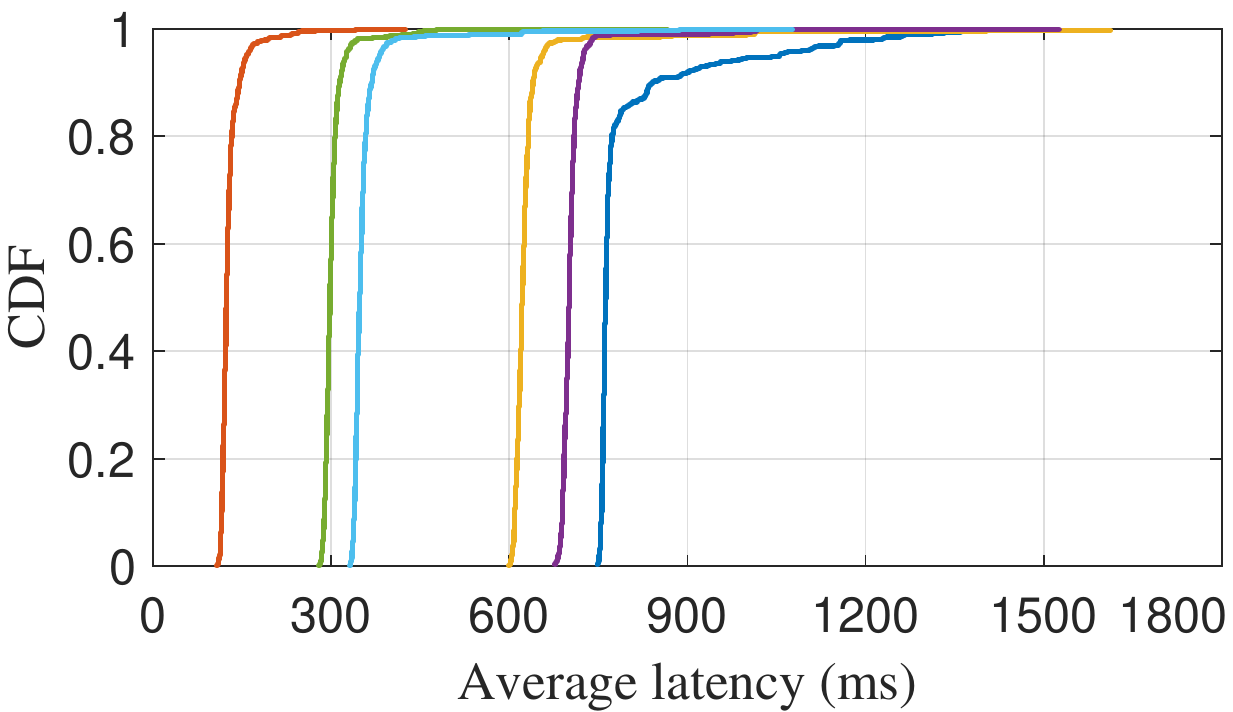}
		\end{minipage}
	}
	\caption{Experiment results show the data access latencies from storage nodes to end users over a period of one hour (from 15:00 to 16:00 on July 30, 2020).}
	\label{fig:Latency}
\end{figure*}

Then, we consider the performance of caching at the frontend server.
Fig.~\ref{fig:Latency_Example} illustrates the latency reduction performance by gradually increasing the number of cached data chunks for different data items.
In the design of the preliminary experiments, only data chunks will be cached to avoid degraded read.
The data chunk with higher access latency is cached with higher priority.
Then, the data access latency can be progressively decreased.
Fig.~\ref{fig:Latency_Example} shows the average access latency of caching different number of data chunks on the frontend server in Victoria.
Let $\varepsilon_m$ denote the number of cached data chunks for data item $m$, $\varepsilon_m \in \{0,...,K\}$, $m \in \mathcal{M}$.
We have the following two observations:

\begin{itemize}
\item The access latency function $f_m(\varepsilon_m)$ is nonlinear.

\item The storage locations of chunks may be different for various data items, e.g., data items A and B in Fig.~\ref{fig:System_Model}.
For various data items, the access latency function $f_m(\varepsilon_m)$ could also be different due to the diverse storage locations.
For instance, if three chunks are cached for data item A, the data access latency is reduced by 40.3\%.
For data item B, three cached data chunks can reduce the latency by 62.6\%.
\end{itemize}

The observations show that the latency reduction varies from data item to data item.
Traditional caching schemes at the data item level cannot achieve the full benefits of caching.

\subsection{Caching Problem Statement} \label{subsec:ProblemStatement}

To minimize the overall latency of data read requests at a frontend server, the number of cached chunks $\varepsilon_m$ for each data item should be optimized as follows~\footnote{For data item $m$, $\varepsilon_m$ data chunks placed at the farthest storage nodes, i.e., with the highest data access latencies, will be cached.}:
\begin{equation}\label{equ:Opt}
	\begin{array}{l}
	\mathop {\min }\limits_{\varepsilon_m \in \mathbb{N}, m \in \mathcal{M}} \sum\limits_{m \in \mathcal{M}} f_m(\varepsilon_m) \cdot r_m \\
	{\rm{s}}.{\rm{t}}.\quad 0 \leq \varepsilon_m \leq K,\\
	\quad \quad \sum\nolimits_{m \in \mathcal{M}} \varepsilon_m \leq C,\\
	\end{array}
\end{equation}
where $r_m$ denotes the user request rate for data item $m$.
Constraint $0 \leq \varepsilon_m \leq K$ ensures that the number of cached chunks for each data is no larger than the number of coded data chunks.
Furthermore, $\sum_{m \in \mathcal{M}} \varepsilon_m \leq C$ ensures that the cache capacity constraint is not violated.
Then, the hardness of the optimization problem~(\ref{equ:Opt}) is examined as follows:
\begin{itemize}
\item Experiments demonstrate that $f_m(\varepsilon_m)$, e.g., the latency function of data item B in Fig.~\ref{fig:Latency_Example}, could be both nonlinear and nonconvex.
    Therefore, problem~(\ref{equ:Opt}) is an integer programming problem with non-convexity and nonlinearity.
    Generally speaking, complex combinatorial techniques are needed for an efficient solution~\cite{Liu_TSC_18}.

\item In a dynamic scenario, the network conditions and the data request rates may be time variant.
    It is a challenge to design online schemes that can react quickly to real-time changes.
\end{itemize}

For a large-scale storage system with uncertainties, the goals of the caching schemes are 1) achieving the ultimate performance of caching on latency reduction, 2) highly efficient for a quick caching decision, and 3) flexible to change the caching decision in an online fashion.

\section{Optimal Caching Schemes Design}\label{sec: Optimal_caching}

In this section, the motivation and design overview are first presented.
Then, assuming that future data popularity and network condition information is available, an offline scheme is designed to find the optimal caching solution.
The results of the optimal scheme can be regarded as a lower bound of the data access latency when the ultimate performance of caching is achieved.

\subsection{Motivation and Design Overview} \label{subsec:Motivation}

As mentioned in Sec.~\ref{subsec:caching}, the latency function $f_m(\varepsilon_m)$ can be different for various data items.
Considering the diversity of chunk storage locations, it is complicated to design a mathematical latency model that is suitable for the entire storage system.
Therefore, the end-to-end latency of data access is used to quantify the benefits of caching.
Through experiments, we analyze the characteristic of data access latencies over WAN.
Based on the deployed experiment platform, the access latencies of data chunk retrieval from remote {\tt buckets} to end users are measured over an hour (from 15:00 to 16:00 on July 30, 2020).
Fig.~\ref{fig:Latency}(a), (b), and (c) show that the data access latencies over WAN remain fairly stable in the long term.
The reason is that the propagation delay dominates and depends primarily on the physical distance of data transmission~\cite{Bogdanov_SoCC_18}.
Experiment results confirm the positive correlation between the physical distance and the latency.
For instance, the data access latency from S\~{a}o Paulo to end users in Victoria is always higher than that from Oregon.

Fig.~\ref{fig:Latency}(d), (e), and (f) demonstrate that the data access latencies are stable for most of the service time.
For example, 89.58\% of the data access latencies from Oregon to Victoria will be in the range of [100, 140] ms.
Besides, 91.94\% of the data access latencies from Ireland to Victoria will be in the range of [650, 700] ms.
For two arbitrary storage nodes, the latency gap also keeps fairly stable in the long term.
The average data access latency can be used to quantify the benefits of caching.
In Sec.~\ref{subsec:OptimalScheme}, assuming that the future data popularity and network condition information are available, an optimal scheme is designed to explore the ultimate performance of caching on latency reduction.

\subsection{Optimal Caching Scheme} \label{subsec:OptimalScheme}

Let $l_i$ denote the average network latency of data access from storage node $i$ to end users for a certain period of time.
According to the storage location information in~(\ref{equ:InitialPlacement}), the average latency of sending data chunk $m_k$ is given by
\begin{equation} \label{equ:Chunk_Latency}
    l_{m_k}= l_i \cdot \mathbf{1}({m_k \rightarrow i}),
\end{equation}
where $\mathbf{1}({m_k \rightarrow i})$ indicates whether data chunk $m_k$ is placed at node $i$ or not, returning 1 if true or 0 otherwise, $k \in \{1, ..., K\}$, $i \in \mathcal{N}$.
For ease of notation, let us relabel the data chunks according to the descending order of the data access latency.
For example, $m_1$ denotes the data chunk placed at the farthest storage node.
Based on the sorted latencies $\{l_{m_1}, ..., l_{m_K}\}$, a $(K+1)$-dimensional array is maintained for each data item
\begin{align}
\begin{split} \label{equ:Latency_Array}
	\{\tau_{m,0}, \tau_{m,1},..., \tau_{m,K}\} = \{0, (l_{m_1}-l_{m_2})\cdot r_m, ... , (l_{m_1}-\\ l_{m_k}) \cdot r_m, ..., (l_{m_1}-l_{m_K}) \cdot r_m, l_{m_1} \cdot r_m\} ,
\end{split}
\end{align}
where $\tau_{m,k-1}=(l_{m_1}-l_{m_k}) \cdot r_m$ represents the value of reduced latency when $k-1$ data chunks are cached.
For example, if chunk $m_1$ and $m_2$ are cached, then $l_{m_3}$ becomes the bottleneck.
Clearly, $\tau_{m,0} = 0$ as no latency will be reduced without caching.
When all $K$ data chunks are cached, the maximum value of reduced latency is $\tau_{m,K} = l_{m_1} \cdot r_m$.
Based on the reduced latency information, an $M \times (K+1)$ valuation array $\tau$ can be maintained for all data items.
%
As shown in Fig.~\ref{fig:Latency_Example}, $f_m(\varepsilon_m)$ is a monotonic decreasing function.
Minimizing the overall data access latency in (\ref{equ:Opt}) is equivalent to maximizing the total amount of reduced latency:
\begin{equation}\label{equ:Opt_maximize}
	\begin{array}{l}
    \mathop {\max } \limits_{\varepsilon_m \in \mathbb{N}, m \in \mathcal{M}} \Theta (\varepsilon_m) = \sum\limits_{m \in \mathcal{M}} \tau_{m, \varepsilon_m} \\
	{\rm{s}}.{\rm{t}}.\quad 0 \leq \varepsilon_m \leq K,\\
	\quad \quad \sum\limits_{m \in \mathcal{M}} \varepsilon_m = C.\\
	\end{array}
\end{equation}

As $C \leq M K$, $\sum_{m \in \mathcal{M}} \varepsilon_m = C$ ensures the cache capacity can be fully utilized for latency reduction.
Then, we determine the optimal decision $\varepsilon_m$ in the following two steps:

\renewcommand{\algorithmiccomment}[1]{\hfill\eqparbox{COMMENT}{$\triangleright$ #1}}
\renewcommand{\algorithmicrequire}{\textbf{Input:}}
\renewcommand{\algorithmicensure}{\textbf{Output:}}
\setcounter{algorithm}{0}
\begin{algorithm}
	\caption{Iterative Search for Cache Partitions}
	\label{Alg:ExhaustiveSearch}
	\begin{algorithmic}[1]
		\REQUIRE Cache capacity $C$, number of coded data chunks $K$, number of data items $M$.
		\ENSURE Set of cache partitions $\chi$.
		\renewcommand{\algorithmicensure}{\textbf{Initialization:}}
		\ENSURE $x_1 \leftarrow C$, $\forall x_k \leftarrow 0$, $k\in\{2,...,K\}$, $\chi \leftarrow \emptyset$.
		\WHILE{$\{x_1,...,x_K\} \notin \chi$}
        \STATE Add $\{x_1,...,x_K\}$ to $\chi$ if $\sum_{k=1}^{K}x_k \leq M$;
        \STATE $x_2 \leftarrow x_2 + 1$ if $x_2 < \hat x_2$ else $x_2 \leftarrow 0$;
        \FOR{$k=3$ to $K$}
        \IF{$x_{k-1}$ is reset to 0}
        \STATE $x_k \leftarrow x_k + 1$ if $x_k < \hat x_k$ else $x_k \leftarrow 0$;
        \ENDIF
        \STATE $x_1=C-\sum_{k=2}^{K}k x_k$;
        \ENDFOR
        \ENDWHILE
	\end{algorithmic}
\end{algorithm}

\subsubsection{{\bf Cache Partitions}}
Let $x_k$ denote the number of data items with $k$ data chunks cached, i.e.,
\begin{equation} \label{equ:Data_num}
	x_k = \sum \nolimits_{m \in \mathcal{M}} \mathbf{1}(\varepsilon_m= k),
\end{equation}
where $\mathbf{1}(\varepsilon_m= k)$ indicates whether $k$ data chunks are cached for data item $m$ or not, $x_k \in \mathbb{N}$.
We define $\{x_1,...,x_K\} \in \chi$ as a potential partition of caching decisions.
Based on the constraints in~(\ref{equ:Opt_maximize}), we have the Diophantine equations as follows
\begin{equation} \label{equ:Diophantine}
\left\{\begin{matrix}
    x_1 + 2x_2 + ... + K x_K = C,
    \\
    x_1 +  x_2 + ... +   x_K \leq M.
\end{matrix}\right.
\end{equation}

All partitions of caching decisions $\chi$ can be derived from~(\ref{equ:Diophantine}) through iterative search.
The pseudo code of the iterative search is listed in Algorithm~\ref{Alg:ExhaustiveSearch}.
Given the value of $\{x_{k+1},...,x_K\}$, the maximum value that $x_k$ can be assigned is
\begin{equation} \label{equ:Max_x_k}
   \hat x_k = \left \lfloor \frac{C-\sum_{n=k+1}^{K}n x_n}{k} \right \rfloor.
\end{equation}

Initially, $\{x_1, x_2, ..., x_K\}=\{C,0,...,0\}$ is a feasible solution if $\sum_{k=1}^{K}x_k = C \leq M$.
We gradually increase the value of $x_2$ from 0.
If $x_2=\hat x_2$, $x_2$ is reset to 0 in the next iteration.
In this way, the value of $x_k$ can be iteratively determined, $k \in \{3,...,K\}$.
If $x_{k-1}=0$, $x_k$ is incremented by 1 next.
If $x_k = \hat x_k$, $x_k$ is also reset to 0 then.
Based on the value of $\{x_{2},...,x_K\}$, $x_1$ is set to $C-\sum_{k=2}^{K}k x_k$.
We repeat the above process until all cache partitions are included in $\chi$.
A simple example is used to demonstrate the iterative search process.
Let $K = 3$ and $C = 5$.
Algorithm~\ref{Alg:ExhaustiveSearch} sequentially appends five cache partitions, i.e., $\{5,0,0\}$, $\{3,1,0\}$, $\{1,2,0\}$, $\{2,0,1\}$, and $\{0,1,1\}$.
The theoretical analysis of the iterative search algorithm is provided as follows.

\begin{property} \label{property:1}
Algorithm~\ref{Alg:ExhaustiveSearch} needs less than $\prod_{k=2}^{K} (\left \lfloor \frac{C}{k}  \right \rfloor + 1)$ iterations to finish.
The size of set $\left | \chi \right |$ is also less than $\prod_{k=2}^{K} (\left \lfloor \frac{C}{k}  \right \rfloor + 1)$.
\end{property}

\begin{proof}
According to~(\ref{equ:Diophantine}), as $\forall x_k \in \mathbb{N}$, the maximum value of $x_k$ is $\left \lfloor \frac{C}{k} \right \rfloor$.
Therefore, $x_k$ can be assigned up to $\left \lfloor \frac{C}{k} \right \rfloor +1$ different values.
As $\{x_{2},...,x_K\}$ cannot be assigned to the maximum values at the same time, Algorithm~\ref{Alg:ExhaustiveSearch} needs less than $\prod_{k=2}^{K} (\left \lfloor \frac{C}{k}  \right \rfloor + 1)$ iterations to obtain all cache partitions in $\chi$.
The set size $\left | \chi \right | < \prod_{k=2}^{K} (\left \lfloor \frac{C}{k}  \right \rfloor + 1)$ also holds.
\end{proof}

\setcounter{algorithm}{1}
\begin{algorithm}
	\caption{Optimal Assignment for Cache Partitions}
	\label{Alg:OptimalCaching}
	\begin{algorithmic}[1]
		\REQUIRE Set of cache partitions $\chi$, valuation array $\tau$, market clearing price $p_m$.
		\ENSURE Caching decision $\varepsilon_m$.
		\renewcommand{\algorithmicensure}{\textbf{Initialization:}}
		\ENSURE $\forall \varepsilon_m,\hat{\varepsilon}_m, p_m \leftarrow 0$.
		\FOR{Cache partition $\{x_1,...,x_K\} \in \chi$}
		\STATE $\mathcal{G} \leftarrow \mathit{preferred\_seller\_graph}(\tau,\{x_1,...,x_K\})$;
		\STATE $\{\mathcal{M}^{[\textup{c}]}, \mathcal{K}^{[\textup{c}]}\} \leftarrow \mathit{constricted\_set}(\mathcal{G})$;
        \STATE $\tau^{\prime} \leftarrow \tau$;  
		\WHILE{$\{\mathcal{M}^{[\textup{c}]}, \mathcal{K}^{[\textup{c}]}\} \neq \emptyset$}
        \FOR{$m \in \mathcal{M}^{[\textup{c}]}$}
        \FOR{$k \in \mathcal{K}^{[\textup{c}]}$}
        \STATE $V_{k} \leftarrow \mathit{sum\_top}(\tau^{\prime}(:,k),x_k)$;
        \STATE $V_{k}^{m} \leftarrow \mathit{sum\_top}(\tau^{\prime}(:,k) \setminus \{\tau^{\prime}_{m,k}\},x_k)$;
        \ENDFOR
        \STATE $p_m \leftarrow p_m + \max\{1, \max\{V_{k}-V_{k}^{m}\}\}$; 
        \STATE $\tau^{\prime}(m,:) \leftarrow \tau(m,:) - p_m$;
        \ENDFOR
        \STATE $\mathcal{G} \leftarrow \mathit{preferred\_seller\_graph}(\tau^{\prime},\{x_1,...,x_K\})$;
		\STATE $\{\mathcal{M}^{[\textup{c}]}, \mathcal{K}^{[\textup{c}]}\} \leftarrow \mathit{constricted\_set}(\mathcal{G})$;
        \ENDWHILE
        \STATE $\hat{\varepsilon}_m \leftarrow k$ according to $\mathcal{G}$;
        \STATE $\forall \varepsilon_m \leftarrow \hat{\varepsilon}_m$ if $\sum_{m \in \mathcal{M}} \tau_{m, \varepsilon_m} < \sum_{m \in \mathcal{M}} \tau_{m, \hat{\varepsilon}_m}$;
        \ENDFOR
	\end{algorithmic}
\end{algorithm}

\subsubsection{{\bf Optimal Assignment for Cache Partitions}}
Recall that each element $x_k$ in a cache partition $\{x_1,..., x_K\}$ represents that $x_k$ data items are selected, each of which cached $k$ chunks in the frontend server.
For example, if data item $m$ is assigned to $x_k$, the caching decision for $m$ becomes $\varepsilon_m=k$.
As shown in Fig.~\ref{fig:Assignment_Example}, the data items and cache partition can be treated as {\bf sellers} and {\bf buyers}, respectively.
According to the valuation array $\tau$, each buyer has a valuation for each data item.
Thus, the optimal assignment can be considered as a market competing for data items with higher valuations.
The pseudo code of the optimal assignment is listed in Algorithm~\ref{Alg:OptimalCaching}.
As shown in Fig.~\ref{fig:Assignment_Example}, buyers may compete for a certain data item.
The basic idea is to increase the price $p_m$ of data item $m$ until the competition is over.
The price $p_m$ is known as {\bf market clearing price}~\cite{Market_Clearing}.
With no competition, the local optimal caching decision $\hat{\varepsilon}_m$ can be obtained for a certain cache partition.
The global optimal assignment is the one that has the maximum valuation among all cache partitions in $\chi$.

\begin{figure}[htbp]
\centerline{\includegraphics[width=3.0in]{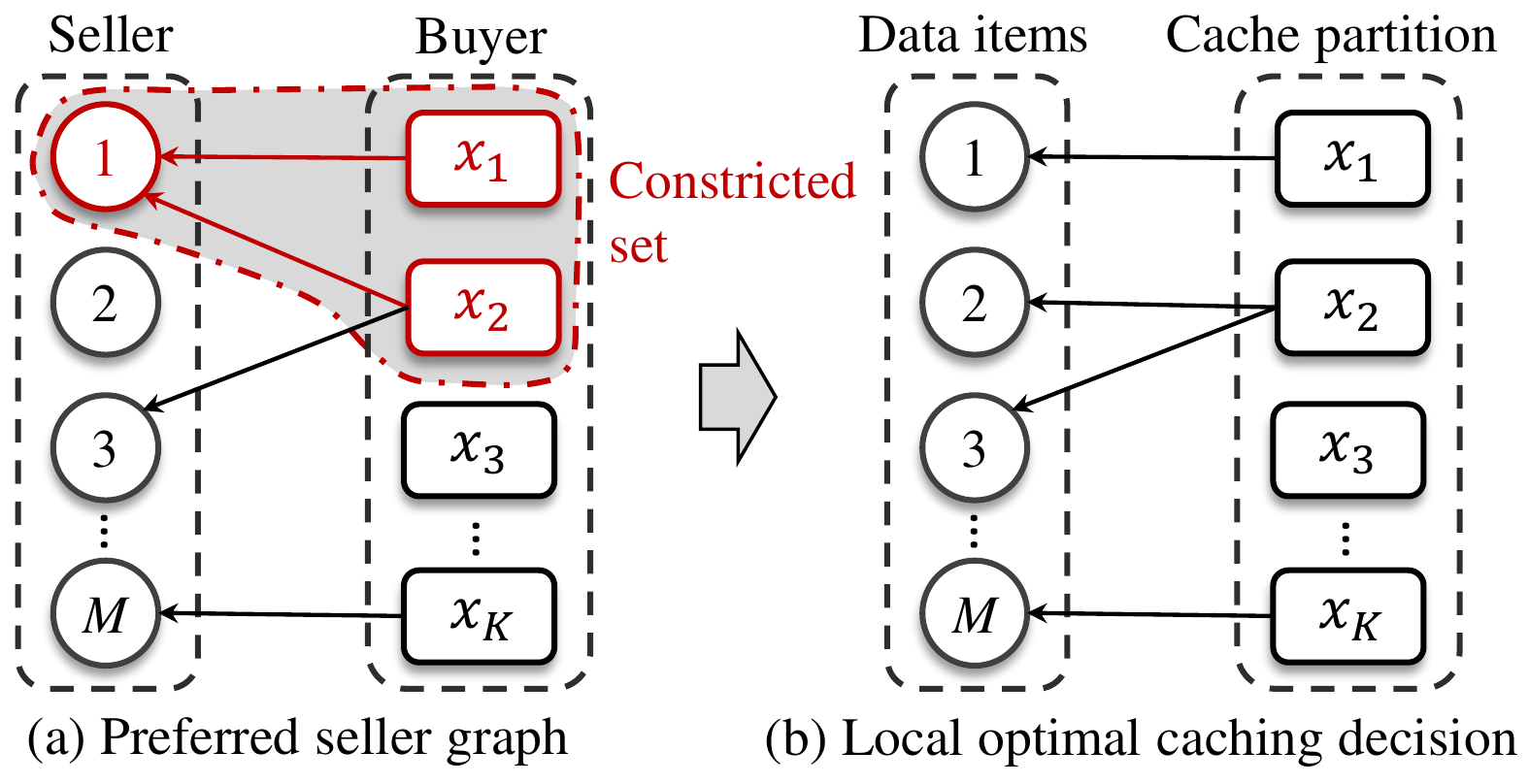}}
\caption{An illustration of the data item assignment for data partition is shown.}
\label{fig:Assignment_Example}
\end{figure}

Let $\tau(:,k)$ denote the $k$-th column of $\tau$, which represents the reduced latencies of all data items when $k$ data chunks of that data item are cached.
To maximize the caching benefits, function $\mathit{preferred\_seller\_graph}$ matches sellers and buyers with the largest $x_k$ elements in $\tau(:,k)$.
As shown in Fig.~\ref{fig:Assignment_Example}(a), a preferred seller graph $\mathcal{G}$ is constructed with function $\mathit{preferred\_seller\_graph}$.
In $\mathcal{G}$, different buyers may compete for the same data item, while each data item can only be assigned to one buyer.
Then, the constricted set $\{\mathcal{M}^{[\textup{c}]}, \mathcal{K}^{[\textup{c}]}\}$ is constructed with function $\mathit{constricted\_set}$, where $\mathcal{M}^{[\textup{c}]}$ denotes the set of competed data, and $\mathcal{K}^{[\textup{c}]}$ represents the set of competing buyers.
Then, we show how to set the market clearing price $p_m$ for each data $m \in \mathcal{M}^{[\textup{c}]}$.
We initialize $p_m=0$, $\forall m \in \mathcal{M}$.
Then, the payoff array $\tau^{\prime}$ can be initialized as the valuation array $\tau$ with $\tau^{\prime} \leftarrow \tau$.
Let $V_{k}$ denote the total payoff of assigning data items (including the competed data item $m$) to buyer $x_k \in \mathcal{K}^{[\textup{c}]}$, i.e.,
\begin{equation} \label{equ:Kvaluations}
    V_{k} = \mathit{sum\_top}(\tau^{\prime}(:,k),x_k),
\end{equation}
where function $\mathit{sum\_top}(\tau^{\prime}(:,k),x_k)$ represents the sum of largest $x_k$ elements in set $\tau^{\prime}(:,k)$.
Then, if data item $m$ is not assigned to $x_k$, the total payoff is given by
\begin{equation} \label{equ:KAvaluations}
    V_{k}^{m} = \mathit{sum\_top}(\tau^{\prime}(:,k) \setminus \{\tau^{\prime}_{m,k}\},x_k).
\end{equation}
If $\max\{V_{k}-V_{k}^{m}\} > 0$ for all buyers in $\mathcal{K}^{[\textup{c}]}$, $p_m$ is incremented by $\max\{V_{k}-V_{k}^{m}\}$.
Then, the payoff of $m$ for all buyers $\tau^{\prime}(m,:)$ is updated as $\tau(m,:) - p_m$.
This ensures $m$ will be assigned to only one buyer $k = \argmax\{V_{k}-V_{k}^{m}\}$ in the next iteration.
If $\max\{V_{k}-V_{k}^{m}\} = 0$, $p_m$ is incremented by the unit price 1.

The above process is repeated until the constricted set is empty.
The whole process needs $M$ iterations at most as at least one data item is excluded from the constricted set in one iteration.
Then, the updated preferred seller graph with no competition is added to the existing assignment.
If the local caching decision $\hat{\varepsilon}_m$ for the current cache partition yields a higher payoff than all previous ones, the global caching decision is updated with $\varepsilon_m \leftarrow \hat{\varepsilon}_m$.
The theoretical analysis of the assignment algorithm is provided as follows.

\begin{theorem}\label{thm:OPT}
	Algorithm~\ref{Alg:OptimalCaching} yields the optimal caching decision.
\end{theorem}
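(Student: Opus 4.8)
The plan is to prove Theorem~\ref{thm:OPT} in two layers: reduce the global problem to a collection of per-partition assignment problems, and then show the price-adjustment loop of Algorithm~\ref{Alg:OptimalCaching} solves each such problem exactly.

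First I would note that any feasible $(\varepsilon_m)_{m\in\mathcal M}$ of problem~(\ref{equ:Opt_maximize}) induces counts $x_k=\sum_m\mathbf 1(\varepsilon_m=k)$ satisfying the Diophantine system~(\ref{equ:Diophantine}), and conversely Algorithm~\ref{Alg:ExhaustiveSearch} enumerates \emph{every} nonnegative integer solution of~(\ref{equ:Diophantine}) — which is exactly the content of Property~\ref{property:1} and its surrounding discussion. Hence the feasible region of~(\ref{equ:Opt_maximize}) is the disjoint union, over $\{x_k\}\in\chi$, of the assignments that use exactly $x_k$ items at cache level $k$, and $\max_\varepsilon\Theta(\varepsilon)$ equals the maximum over $\chi$ of the best $\Theta$-value attainable within a single partition class. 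Since Algorithm~\ref{Alg:OptimalCaching} loops over all partitions in $\chi$ and retains the running maximum (the line $\varepsilon_m\leftarrow\hat\varepsilon_m$ if $\sum_m\tau_{m,\varepsilon_m}<\sum_m\tau_{m,\hat\varepsilon_m}$), it suffices to prove that for a \emph{fixed} partition the inner block returns an assignment of maximum valuation $\sum_m\tau_{m,\hat\varepsilon_m}$.

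Next I would recast the per-partition problem as a bipartite matching market~\cite{Market_Clearing}. Create $C'=\sum_{k=1}^K x_k$ \emph{buyers}: for each $k$, $x_k$ identical copies of a ``level-$k$'' buyer that values data item $m$ at $\tau_{m,k}$; pad $\mathcal M$ with $M-C'\ge 0$ dummy sellers of value $0$ (legitimate since~(\ref{equ:Diophantine}) forces $C'\le M$). A perfect matching then selects, for each $k$, exactly $x_k$ distinct items to cache at level $k$ and leaves the rest at $\varepsilon_m=0$ with $\tau_{m,0}=0$, so its weight equals $\Theta$ and an optimal per-partition assignment is precisely a maximum-weight perfect matching. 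I would then invoke the two standard facts about market-clearing prices: (i) there exist prices $p=(p_m)$ whose preferred-seller graph $\mathcal G$ — joining each buyer to the items maximizing $\tau_{m,k}-p_m$ — admits a perfect matching, and (ii) any perfect matching in the preferred-seller graph of a market-clearing price vector maximizes total weight. Fact (ii) is the usual exchange argument: summing the preferred-seller inequalities along the edges of the algorithm's matching and of any competitor $\mu'$, the price terms cancel because each item is used exactly once in both matchings, leaving $\mathrm{weight}(\mu)\ge\mathrm{weight}(\mu')$.

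It then remains to show the \texttt{while} loop terminates at a market-clearing price vector. Each iteration extracts a constricted set $\{\mathcal M^{[\mathrm c]},\mathcal K^{[\mathrm c]}\}$ — competed items and the buyer-levels contending for them — which by the (many-to-one) Hall condition certifies that $\mathcal G$ has no perfect matching, and raises $p_m$ for every competed item by $\max\{1,\max_k(V_k-V_k^m)\}$, where $V_k-V_k^m$ is precisely the marginal loss in level-$k$'s attainable top-$x_k$ payoff from deleting $m$. I would prove termination by a monotonicity/boundedness argument: prices are nonnegative, integer-valued and nondecreasing, and once $p_m$ exceeds every valuation $\tau_{m,k}$ no buyer can prefer $m$, so $m$ is permanently outside all constricted sets; and because the update sets $\tau'(m,:)\leftarrow\tau(m,:)-p_m$ with the increment equal to the marginal contribution to the unique best buyer $k^\star=\argmax_k(V_k-V_k^m)$, item $m$ becomes preferred by exactly that one level afterward, so at least one item leaves $\mathcal M^{[\mathrm c]}$ each round and the loop runs at most $M$ times. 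On exit $\mathcal G$ has no constricted set, hence (Hall) a perfect matching; since the price raises never spoil the preferred-seller structure of items outside the current constricted set, the terminal $p$ is market-clearing, and by (ii) the extracted $\hat\varepsilon$ is per-partition optimal. Combining with the first reduction step proves the theorem.

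I expect the main obstacle to be exactly the termination-and-correctness of this price loop under the somewhat aggressive increment $\max\{1,\max_k(V_k-V_k^m)\}$ together with the ``identical copies'' buyer structure: one must verify the raise never overshoots (ejecting a previously well-matched item and risking a cycle) nor undershoots (failing to make progress), and that ``constricted set'' is the correct Hall-type certificate when several buyers share a valuation vector. I would address the copies by the standard reduction to distinct buyers with identical valuations, show the chosen increment is the least raise that removes $m$ from the neighborhood of its over-subscribed level, and thereby recover the classical finite-convergence proof; the $\max\{1,\cdot\}$ floor is only active in the degenerate all-ties case and is absorbed by the integrality-and-boundedness bound above.
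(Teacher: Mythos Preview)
Your proposal is correct and follows the same two-layer skeleton as the paper (reduce to per-partition problems via the exhaustive enumeration of $\chi$, then argue per-partition optimality), but your inner argument is genuinely different. The paper does \emph{not} invoke Hall's condition, perfect matchings, or the general market-clearing optimality theorem; instead it gives a short pairwise-exchange argument: pick any two items $m,m'$ assigned by the algorithm to levels $k,k'$ and verify directly that $\tau_{m,k}+\tau_{m',k'}\ge\tau_{m',k}+\tau_{m,k'}$, splitting on whether $k,k'$ ever appeared together in a constricted set competing for $m$. When they did, the paper reads off $V_k-V_k^m\ge V_{k'}-V_{k'}^m$ from the price update rule and combines it with the fact that $m'$ ended up uncontested at $k'$ to get the desired inequality; when they did not, the inequality is immediate from the top-$x_k$ selection. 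Since no $2$-swap improves and the feasible region is a transportation polytope, local optimality under swaps gives global optimality.

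What each approach buys: the paper's exchange proof is short and uses nothing beyond the algorithm's own bookkeeping, but its case split is terse (it does not spell out the mixed case where only one of $k,k'$ was ever constricted, nor why pairwise swap-optimality suffices). Your route is longer and imports more machinery, but it makes the structure transparent --- the algorithm is exactly an ascending-price auction on a replicated-buyer market, termination and optimality then follow from standard results --- and it handles the ``identical copies'' and tie-breaking issues you flag, which the paper leaves implicit. Either argument closes the theorem; yours is the more robust write-up.
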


\begin{proof}
    Firstly, we prove that the optimal decision can be obtained for each cache partition.
	This is equivalent to proving that interchanging any two pairs of caching decisions cannot further increase the total valuations.
	Let $m$ and $m'$ denote two randomly selected data items in $\mathcal{G}$.
	With Algorithm~\ref{Alg:OptimalCaching}, let $k$ and $k'$ denote their corresponding number of cached chunks.
	To verify optimality, we need to prove
	\begin{equation} \label{equ:TheoremProof_1}
	    \tau_{m,k}+\tau_{m',k'}\geq \tau_{m',k}+\tau_{m,k'}.
	\end{equation}
	
	If $k$ and $k'$ are not in the constricted set, i.e., $\tau_{m,k}\geq \tau_{m',k}$ and $\tau_{m',k'}\geq \tau_{m,k'}$, we have $\tau_{m,k}+\tau_{m',k'}\geq \tau_{m',k}+\tau_{m,k'}$.
    If $k$ and $k'$ are in the constricted set of $m$ in the previous iteration~\footnote{The case of data item $m'$ can be proved in the same way.} and $m$ is finally assigned to $k$, we have
	\begin{equation} \label{equ:TheoremProof_2}
	    \begin{aligned}
	        V_{k}-V_{k}^{m} \geq V_{k'}-V_{k'}^{m}.	
	    \end{aligned}
	\end{equation}
    Besides, $m'$ is finally assigned to $k'$ with no competition, i.e., $\tau_{m',k}$ is not one of the largest $x_k$ elements in set $\tau^{\prime}(:,k)$.
    We have
    \begin{equation} \label{equ:Final_assign}
    \left\{\begin{matrix}
    V_{k}-V_{k}^{m} \leq \tau_{m,k}-\tau_{m',k},
    \\
    V_{k'}-V_{k'}^{m} = \tau_{m,k'}-\tau_{m',k'}.
    \end{matrix}\right.
    \end{equation}
    This means
	\begin{equation}
	    \begin{aligned}
	      \tau_{m,k}-\tau_{m',k} \geq \tau_{m,k'}-\tau_{m',k'},\\
	    \end{aligned}
	\end{equation}
    which concludes that the optimal caching decision is obtained for the cache partition.
	As all partitions in $\chi$ are considered, Algorithm~\ref{Alg:OptimalCaching} yields the global optimal caching decision.
\end{proof}

\begin{property}\label{pro:Computation_complexity}
The computation complexity of Algorithm~\ref{Alg:OptimalCaching} is less than $O(\frac{C^{K-1} \cdot M^2}{(K-1)!})$.
\end{property}

\begin{proof}
To obtain the preferred seller graph, all columns in $\tau$ are sorted via the radix sort algorithm.
The sorting complexity is $O(M K)$ (Line 2).
Then, all data items need to be considered to determine the constricted set with the complexity of $O(M)$ (Line 3).
As all buyers may compete for a data item, the calculation of the market clearing price needs $K$ iterations at most (Line 7--12).
Furthermore, the preferred seller graph and the constricted set are updated with the complexity of $O(M + M K)$ (Line 14--15).
As discussed above, the while loop needs $M$ iterations at most.
Furthermore, the data item assignment in Line 17 and 18 needs $K$ iterations at most.
The optimal assignment for a cache partition needs $M^2(K+1) + 2MK + M + K$ iterations at most.
The computation complexity for a cache partition is $O(M^2 K)$.
Considering all cache partitions in $\chi$, the computation complexity of Algorithm~\ref{Alg:OptimalCaching} is less than $O(\frac{C^{K-1} \cdot M^2}{(K-1)!})$.
\end{proof}

Property~\ref{pro:Computation_complexity} demonstrates that the computation complexity of Algorithm~\ref{Alg:OptimalCaching} is mainly determined by the total number of cache partitions $\left | \chi \right |$.
Based on the experiment platform deployed in Sec.~\ref{subsec:caching}, Table~\ref{tab:Impact_of_C} in Sec.~\ref{subsec:Factors} shows the number of cache partitions $\left | \chi \right |$ and the average running time (ART) of Algorithm~\ref{Alg:OptimalCaching} under different settings.
The number of required iterations $\left | \chi \right |$ and the ART increase rapidly with the increase of cache capacity $C$ and the number of coded data chunks $K$.
This means that Algorithm~\ref{Alg:OptimalCaching} may incur a heavy computation burden for a large-scale storage system.
Furthermore, the long running time implies that the optimal scheme cannot react quickly to real-time network changes.
The network states may change before caching decisions can be updated.
To sum up, the optimal scheme is an offline solution with the requirement of future data popularity and network condition information.

\section{Online Caching Scheme Design} \label{sec: Online_caching}

Guided by the optimal caching scheme in Sec.~\ref{subsec:OptimalScheme}, an online caching scheme is proposed with no assumption about future data popularity and network condition information.
Furthermore, we extend the proposed caching schemes to the case of storage server failure.

\subsection{Online Caching Scheme}

Let $\mathcal{T}$ denote the whole data service period.
The online scheme updates the caching decision according to the measured data popularity $r_m^t$ and network latencies $l_i^t$ in real time, $t \in \mathcal{T}$.

{\bf Data Popularity}: The Discounting Rate Estimator (DRE)~\cite{DRE} method is applied to construct the real-time request information $r_m^t$.
On the frontend server, a counter is maintained for each data item, which increases with every data read, and decreases periodically with a ratio factor.
The benefits of DRE are as follows: 1) it reacts quickly to the request rate changes, and 2) it only requires $O(1)$ space and update time to maintain the prediction for each counter.

{\bf Network Latency}: Similar to~\cite{Liu_DataBot_19}, the Exponentially Weighted Moving Average (EWMA) method~\cite{EWMA} is used to estimate the average network latency of data requests.
Specifically, after a data read operation, $l_i^t$ is updated by
\begin{equation}
	l_i^t  = \alpha_l \cdot l_i^t + (1-\alpha_l) \cdot \iota _i,
\end{equation}
where $\iota _i$ is the measured end-to-end latency of a data request, and $\alpha_l$ is the discount factor to reduce the impact of previous requests.
The advantage of EWMA is that it only needs $O(1)$ space to maintain the prediction for each storage node.

Let $\Gamma$ denote the set of data requests in the service period $\mathcal{T}$.
To ensure the adaptivity of our design, the caching decision is updated upon the arrival of each request $\gamma_m^t$, $\gamma_m^t \in \Gamma$, $t \in \mathcal{T}$.
To improve the solution efficiency, there is no need to modify the storage system by completely overriding the existing caching decisions.
The valuation $\{\tau_{m,0},\tau_{m,1},..., \tau_{m,K}\}$ is updated according to the latest measurement of data access latency $l_i^t$ and request rate $r_m^t$.

\setcounter{algorithm}{2}
\begin{algorithm}
	\caption{Online Caching Scheme}
	\label{Alg:OnlineCaching}
	\begin{algorithmic}[1]
		\REQUIRE Cache capacity $C$, number of coded data chunks $K$, number of data items $M$, valuation array $\tau$, set of data requests $\Gamma$ in period $\mathcal{T}$.
		\ENSURE Set of cached data items $\hat{\mathcal{M}}$, online caching decision $\varepsilon_m^t$, $m \in \mathcal{M}$.
		\renewcommand{\algorithmicensure}{\textbf{Initialization:}}
		\ENSURE $\hat{\mathcal{M}} \leftarrow \emptyset$, $\forall \varepsilon_m^t \leftarrow 0$. 
        \FOR {Data request $\gamma_m^t \in \Gamma$, $t \in \mathcal{T}$}
        \STATE Update $\{\tau_{m,0},\tau_{m,1},..., \tau_{m,K}\}$ according to~(\ref{equ:Chunk_Latency}) and~(\ref{equ:Latency_Array});
        \IF {$\sum_{n \in \mathcal{M}} \varepsilon_n^t \leq C-K$ and $\varepsilon_m^t < K$}
        \STATE $\varepsilon_m^t \leftarrow K$, add $m$ to $\hat{\mathcal{M}}$;
        \ELSIF {$\sum_{n \in \mathcal{M}} \varepsilon_n^t > C-K$ and $\varepsilon_m^t < K$}
        \STATE $\hat{\mathcal{M}}^{\prime} \leftarrow \{m\}$;
        \REPEAT
        \STATE $n \leftarrow \argmin_{n \in \hat{\mathcal{M}} \setminus \hat{\mathcal{M}}^{\prime}}\{\frac{\tau_{n,k}}{k}\}$, add $n$ to $\hat{\mathcal{M}}^{\prime}$;
        \UNTIL $K \leq C - \sum_{n \in \hat{\mathcal{M}}} \varepsilon_n^t + \sum_{n' \in  \hat{\mathcal{M}}^{\prime}} \varepsilon_{n'}^t \leq 2K-1$
        \STATE $\forall \varepsilon_{n'}^t \leftarrow 0$, $n' \in \hat{\mathcal{M}}^{\prime}$;
        \STATE Invoke Algorithm~\ref{Alg:ExhaustiveSearch} for cache partition set $\hat{\chi}$ based on the available cache capacity;
        \STATE Invoke Algorithm~\ref{Alg:OptimalCaching} to update the caching decisions $\varepsilon_{n'}^t$ based on $\hat{\mathcal{M}}^{\prime}$ and $\hat{\chi}$, $n' \in \hat{\mathcal{M}}^{\prime}$;
        \STATE Update $\hat{\mathcal{M}}$, remove $n$ from $\hat{\mathcal{M}}$ if $\varepsilon_n^t=0$, $\forall n \in \hat{\mathcal{M}}^{\prime}$;
        \ENDIF
        \ENDFOR
	\end{algorithmic}
\end{algorithm}

Let $\hat{\mathcal{M}}$ denote the set of already cached data items.
If the cache capacity is not fully utilized, i.e., $\sum_{n \in \hat{\mathcal{M}}} \varepsilon_n^t \leq C-K$, all $K$ data chunks of the requested data item $m$ should be cached for latency reduction.
In contrast, if $\sum_{n \in \hat{\mathcal{M}}} \varepsilon_n^t > C-K$, we need to determine 1) whether data item $m$ should be cached or not, 2) how many chunks for $m$ should be cached, and 3) which data items in $\hat{\mathcal{M}}$ should be replaced?
To solve this problem, the data items in $\hat{\mathcal{M}}$ with the lowest valuations per unit are added into subset $\hat{\mathcal{M}}^{\prime}$.
The data items in $\hat{\mathcal{M}}^{\prime}$ are expected to be replaced first by the requested data item $m$ to maximize the total amount of reduced latency.
Furthermore, $m$ is also added into $\hat{\mathcal{M}}^{\prime}$.
All data items in $\hat{\mathcal{M}}^{\prime}$ are cache replacement candidates.
The cached data items in $\hat{\mathcal{M}}$ are gradually added into $\hat{\mathcal{M}}^{\prime}$ until the available cache capacity $C - \sum_{n \in \hat{\mathcal{M}}} \varepsilon_n^t + \sum_{n' \in  \hat{\mathcal{M}}^{\prime}} \varepsilon_{n'}^t \geq K$.
This guarantees that all $K$ data chunks of $m$ have a chance to be cached.
The expansion of $\hat{\mathcal{M}}^{\prime}$ needs $K$ iterations at most with $\left |\hat{\mathcal{M}}^{\prime} \right | \leq K+1$ and $C - \sum_{n \in \hat{\mathcal{M}}} \varepsilon_n^t + \sum_{n' \in  \hat{\mathcal{M}}^{\prime}} \varepsilon_{n'}^t \leq 2K-1$.
Based on the available cache capacity, Algorithm~\ref{Alg:ExhaustiveSearch} is invoked to calculate the cache partition set $\hat{\chi}$.
Then, based on subset $\hat{\mathcal{M}}^{\prime}$ and $\hat{\chi}$, Algorithm~\ref{Alg:OptimalCaching} is invoked to update the caching decisions $\varepsilon_n^t$, $n \in \hat{\mathcal{M}}^{\prime}$.
The pseudo code of the online caching scheme is listed in Algorithm~\ref{Alg:OnlineCaching}.
The theoretical analysis of the online scheme is provided as follows.

\begin{theorem}\label{thm:Online}
	Algorithm~\ref{Alg:OnlineCaching} yields the worst-case approximation ratio of $1-\frac{2K-1}{C}$ upon the arrival of data requests.
\end{theorem}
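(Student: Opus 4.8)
The plan is to compare the total reduced latency achieved by the online caching decision against the (globally) optimal offline value obtained from problem~(\ref{equ:Opt_maximize}), and to show the gap is controlled by a single replacement round. The key observation is that whenever the online scheme reaches the branch where the cache is (nearly) full, it gathers a candidate subset $\hat{\mathcal{M}}^{\prime}$ with $|\hat{\mathcal{M}}^{\prime}|\le K+1$ and frees at most $2K-1$ chunk slots, then re-optimizes \emph{exactly} over those freed slots by invoking Algorithm~\ref{Alg:ExhaustiveSearch} and Algorithm~\ref{Alg:OptimalCaching}. By Theorem~\ref{thm:OPT}, that re-optimization is optimal for the released sub-budget. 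So the only loss relative to the offline optimum comes from the chunks that were \emph{not} reconsidered, i.e., the cached items in $\hat{\mathcal{M}}\setminus\hat{\mathcal{M}}^{\prime}$ whose decisions are frozen.

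First I would set up notation: let $\varepsilon^{\star}_m$ denote an optimal offline solution with value $\Theta^{\star}=\sum_m \tau_{m,\varepsilon^{\star}_m}$, and let $\varepsilon^t_m$ be the online decision after processing the request at time $t$. I would argue that, because $\hat{\mathcal{M}}^{\prime}$ always contains the data items of \emph{lowest valuation-per-unit-chunk} $\tau_{n,k}/k$ among the currently cached ones (Line~8 of Algorithm~\ref{Alg:OnlineCaching}), the chunks that remain frozen are precisely the ``high value density'' ones; hence freezing them cannot cost much. Concretely, the re-optimized slots number at least $K$ and at most $2K-1$, and the frozen part uses at least $C-(2K-1)$ slots, each of which — being selected over the discarded candidates by the density rule — carries average reduced latency no smaller than that of any slot the optimum would place there. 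A clean way to make this rigorous is an exchange/averaging argument: partition the $C$ unit-slots of the optimal solution, charge the $\le 2K-1$ ``uncovered'' slots of the optimum against the online re-optimization (which is optimal on its released budget and hence dominates any competing assignment of equal size), and bound the residual by $\frac{2K-1}{C}\,\Theta^{\star}$, giving $\Theta(\varepsilon^t)\ge\bigl(1-\tfrac{2K-1}{C}\bigr)\Theta^{\star}$.

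The main obstacle I anticipate is making the ``frozen high-density chunks dominate what the optimum would do there'' step airtight, because the valuation array $\tau$ is nonlinear and nonconvex in $k$, so per-unit value $\tau_{m,k}/k$ is not monotone in $k$ and one cannot naively swap a $k$-chunk assignment for $k$ single chunks. I would handle this by working at the granularity of cache partitions: since Algorithm~\ref{Alg:OnlineCaching} re-runs Algorithm~\ref{Alg:OptimalCaching} over \emph{all} partitions of the released budget, the online post-replacement configuration is optimal among all ways to refill exactly those $\le 2K-1$ slots using items in $\hat{\mathcal{M}}^{\prime}$; combined with the density-based choice of which items enter $\hat{\mathcal{M}}^{\prime}$, any alternative (in particular the optimum restricted to a comparable footprint) can be transformed into a feasible online candidate of no greater value, so optimality of the sub-solve finishes the bound. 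I would also note the edge cases — the branch $\sum_n\varepsilon_n^t\le C-K$ where all $K$ chunks of $m$ are simply added with no loss, and the steady state once the cache is full — to confirm the ratio $1-\frac{2K-1}{C}$ holds uniformly over $t\in\mathcal{T}$.
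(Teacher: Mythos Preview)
Your outline identifies the right ingredients --- the density rule in Line~8, the $2K-1$ bound on released slots, and the optimality of the sub-solve via Theorem~\ref{thm:OPT} --- but the argument you sketch diverges from the paper's and, as written, has a real gap.

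The paper does \emph{not} run a single exchange/averaging argument. Instead it splits into two cases according to whether $\sum_{n\in\hat{\mathcal{M}}'\setminus\{m\}}\tau_{n,\varepsilon_n^t}\le\tau_{m,K}$ or not, and in each case it bounds $\Theta^{\star}$ directly by $\sum_{n\in\hat{\mathcal{M}}}\tau_{n,\varepsilon_n^t}+\tau_{m,K}$ (the pre-request cache value plus the value of fully caching the newly arrived item). That upper bound on $\Theta^{\star}$ is the pivot of the whole proof; combined with the density inequality $\frac{1}{C}\sum_{n\in\hat{\mathcal{M}}}\tau_{n,\varepsilon_n^t}\ge\frac{1}{2K-1}\sum_{n\in\hat{\mathcal{M}}'\setminus\{m\}}\tau_{n,\varepsilon_n^t}$ it yields the ratio $\frac{C-2K+1}{C}$ in the first case and $\frac{C}{C+2K-1}$ in the second.

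Your exchange step, by contrast, asserts that the online re-optimization ``dominates any competing assignment of equal size.'' That is where the gap lies: Algorithm~\ref{Alg:OptimalCaching} is invoked only over the items in $\hat{\mathcal{M}}'$ (at most $K+1$ of them), so it is optimal only among refills drawn from $\hat{\mathcal{M}}'$, not among arbitrary $\le 2K-1$-slot assignments the offline optimum might use. Without an additional bound tying $\Theta^{\star}$ to the current cache state --- precisely the role of the paper's inequality $\Theta^{\star}\le\sum_{n\in\hat{\mathcal{M}}}\tau_{n,\varepsilon_n^t}+\tau_{m,K}$ --- your charging scheme cannot close. To repair your route you would need either that bound (which effectively collapses your argument into the paper's Case~1) or a genuinely different control on what the optimum can place in the ``uncovered'' slots; the nonconvexity concern you raise about $\tau_{m,k}/k$ is real and is exactly why the paper avoids a pure per-slot exchange.
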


\begin{proof}
In Algorithm~\ref{Alg:OnlineCaching}, the greedy selection of $\hat{\mathcal{M}}^{\prime}$ (Line 8) may incur performance loss.
Let $\varepsilon_m^t = k$ denote the caching decision obtained with Algorithm~\ref{Alg:OnlineCaching} for request $\gamma_m^t$, $0 \leq k \leq K$.
Then, we consider the following two different cases:

1) $\sum_{n \in  \hat{\mathcal{M}}^{\prime} \setminus \{m\}} \tau_{n, \varepsilon_n^t} \leq \tau_{m,K}$: Since Algorithm~\ref{Alg:OptimalCaching} is invoked, Algorithm~\ref{Alg:OnlineCaching} outputs the optimal decision for subset $\hat{\mathcal{M}}^{\prime}$.
As $\tau_{m,k} \leq \tau_{m,K}$, the obtained objective value $\Theta$ satisfies
\begin{equation} \label{equ:Theorem2Proof_1}
    \Theta \geq \sum \nolimits_{n \in  \hat{\mathcal{M}}} \tau_{n, \varepsilon_n^t} - \sum \nolimits_{n \in  \hat{\mathcal{M}}^{\prime} \setminus \{m\}} \tau_{n, \varepsilon_n^t} + \tau_{m,K}.
\end{equation}

The global optimal objective value satisfies
\begin{equation} \label{equ:Theorem2Proof_2}
    \Theta^* \leq \sum \nolimits_{n \in  \hat{\mathcal{M}}} \tau_{n, \varepsilon_n^t} + \tau_{m,K}.
\end{equation}

Due to the greedy selection, $\frac{\tau_{n', \varepsilon_{n'}^t}}{\varepsilon_{n'}^t} \leq \frac{\tau_{n, \varepsilon_n^t}}{\varepsilon_n^t}$ holds, $\forall n' \in \hat{\mathcal{M}}^{\prime}$, $\forall n \in \hat{\mathcal{M}} \setminus \hat{\mathcal{M}}^{\prime}$.
As $\sum_{n \in  \hat{\mathcal{M}}^{\prime}} \varepsilon_n^t \leq 2K-1$, we have
\begin{equation} \label{equ:Theorem2Proof_3}
    \frac{\sum_{n \in  \hat{\mathcal{M}}} \tau_{n, \varepsilon_n^t}}{C} \geq \frac{\sum_{n \in  \hat{\mathcal{M}}^{\prime} \setminus \{m\}} \tau_{n, \varepsilon_n^t}}{2K-1}.
\end{equation}

The worst-case performance bound is given by
\begin{align}
\begin{split}
    \frac{\Theta}{\Theta^*}& \geq \frac{C-2K+1}{C}.
\end{split}
\end{align}

2) $\sum_{n \in  \hat{\mathcal{M}}^{\prime} \setminus \{m\}} \tau_{n, \varepsilon_n^t} > \tau_{m,K}$: In this case, we have
\begin{align}
\begin{split}
    \Theta^*& < \sum \nolimits_{n \in  \hat{\mathcal{M}}} \tau_{n, \varepsilon_n^t} + \tau_{m,K} \\
    & < \sum \nolimits_{n \in  \hat{\mathcal{M}}} \tau_{n, \varepsilon_n^t}+\sum \nolimits_{n \in  \hat{\mathcal{M}}^{\prime} \setminus \{m\}} \tau_{n, \varepsilon_n^t}.
\end{split}
\end{align}

As $\Theta \geq \sum_{n \in  \hat{\mathcal{M}}} \tau_{n, \varepsilon_n^t}$, we have
\begin{align}
\begin{split}
    \frac{\Theta}{\Theta^*}& > \frac{\sum_{n \in  \hat{\mathcal{M}}} \tau_{n, \varepsilon_n^t}}{\sum_{n \in  \hat{\mathcal{M}}} \tau_{n, \varepsilon_n^t}+\sum_{n \in  \hat{\mathcal{M}}^{\prime} \setminus \{m\}} \tau_{n, \varepsilon_n^t} } > \frac{C}{C+2K-1}.
\end{split}
\end{align}
The proof completes.
\end{proof}

\begin{property}\label{pro:Computation_complexity_2}
For a data request, the computation complexity of Algorithm~\ref{Alg:OnlineCaching} is less than $O((K+1)^3 \cdot K!)$.
\end{property}

\begin{proof}

For any data requests, we have $\left |\hat{\mathcal{M}}^{\prime} \right | \leq K+1$ and $C - \sum_{n \in \hat{\mathcal{M}}} \varepsilon_n^t + \sum_{n' \in  \hat{\mathcal{M}}^{\prime}} \varepsilon_{n'}^t \leq 2K-1$.
According to Property~\ref{property:1}, the set size $\left | \hat{\chi} \right | < \prod_{k=2}^{K} (\left \lfloor \frac{2K-1}{k}  \right \rfloor + 1)$ holds.
As $\frac{2K-1}{k} < K-k+2$, $k \in \{2,...,K\}$, we have $\left \lfloor \frac{2K-1}{k}  \right \rfloor + 1 \leq K-k+2$.
Therefore, $\left | \hat{\chi} \right | < K!$ holds.
Furthermore, similar to the analysis in Property~\ref{pro:Computation_complexity}, the computation complexity for a cache partition is $O((K+1)^3)$.
The computation complexity of Algorithm~\ref{Alg:OnlineCaching} is less than $O((K+1)^3 \cdot K!)$.
\end{proof}

In large-scale storage systems, the number of coded data chunks per data item $K$ is much smaller than the cache capacity $C$ and the number of data items $M$, i.e., $K \ll C$ and $K \ll M$.
Therefore, Theorem~\ref{thm:Online} shows that Algorithm~\ref{Alg:OnlineCaching} can approximate the optimal solution well.
Furthermore, Property~\ref{pro:Computation_complexity_2} shows that the computation complexity of Algorithm~\ref{Alg:OnlineCaching} is tremendously reduced when compared with that of Algorithm~\ref{Alg:OptimalCaching}.
Table~\ref{tab:Impact_of_C} in Sec.~\ref{subsec:Factors} shows the maximum number of required iterations $\left | \hat{\chi} \right |$ and the ART of Algorithm~\ref{Alg:OnlineCaching}.
The low computation complexity ensures that the online scheme can react quickly to real-time changes.

\subsection{Caching with Server Failure} \label{subsec:EC-Caching-With-Failure}

The proposed optimal and online schemes work well without server failure.
However, servers may experience downtime in the distributed storage system.
In this subsection, the proposed caching schemes are extended to the case of storage server failure.
Let $\mathcal{M}_i$ denote the set of remotely unavailable data chunks when a storage server at node $i$ fails.
If data chunk $m_k \in \mathcal{M}_i$ is not cached beforehand, the degraded read is triggered to serve the data requests.
The parity chunk $m_r$ with the lowest data access latency will be fetched from node $j$ for data reconstruction.
The unavailable data chunk $m_k$ is replaced by parity chunk $m_r$, i.e., $m_k \leftarrow m_r$ and $l^t_{mk} \leftarrow l^t_{mr}$.
Similar to~(\ref{equ:Chunk_Latency}), the average latency of sending $m_r$ is given by
\begin{equation} \label{equ:Parity_Chunk_Latency}
	l^t_{m_r} = \min \{l^t_j \cdot \mathbf{1}({m_r \rightarrow j})\}.
\end{equation}

When Algorithm~\ref{Alg:OptimalCaching} or~\ref{Alg:OnlineCaching} suggest caching $m_r$, the recovered data chunk $m_k$ (instead of $m_r$) is directly added into the caching layer.
In this way, the decoding overheads of the subsequent data requests can be reduced.
This means our design still works well when server failure happens.

\begin{figure*}[!t]
	\centering
	\begin{minipage}[b]{0.315\textwidth}
		\centering
		\includegraphics[width=2.35in]{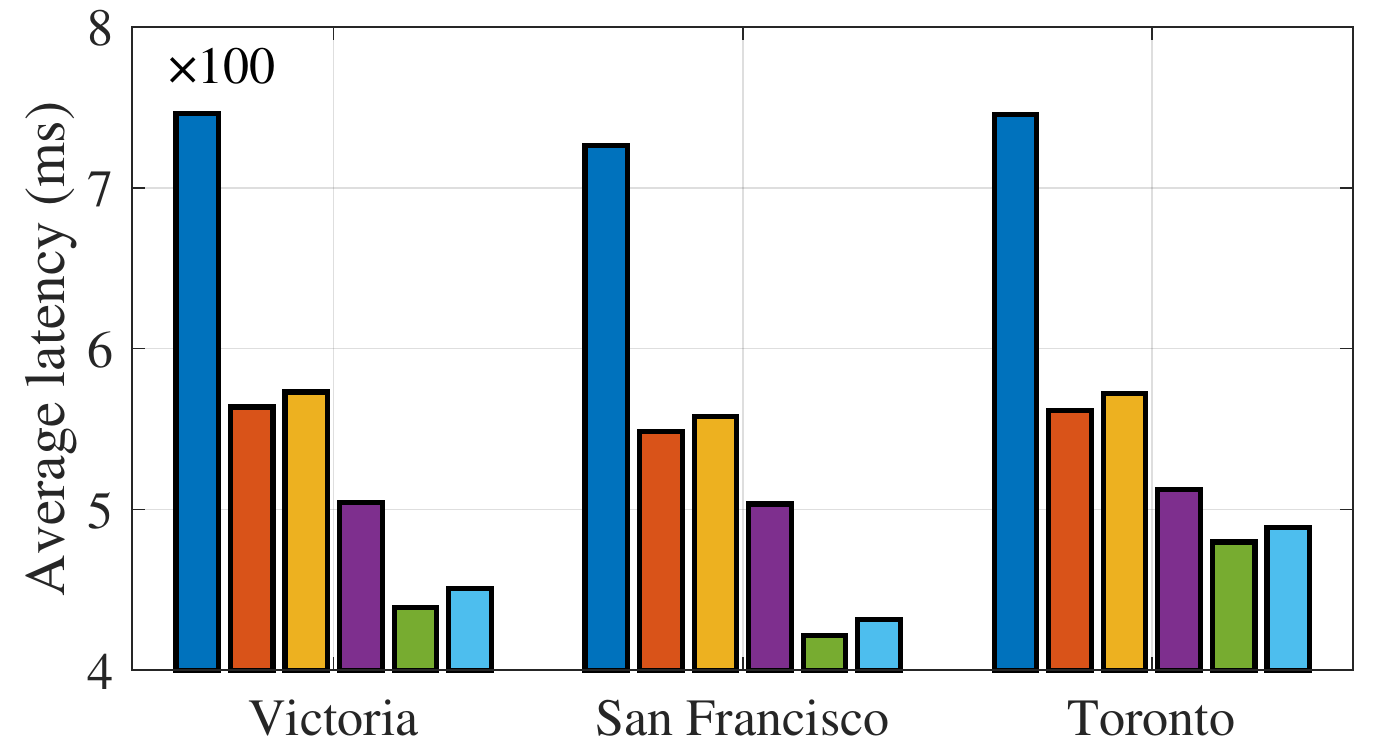}
		\caption{Average data request latencies.}
		\label{fig:Average_latency_case}
	\end{minipage}
	\hspace{5pt}
	\begin{minipage}[b]{0.315\textwidth}
		\centering
		\includegraphics[width=2.35in]{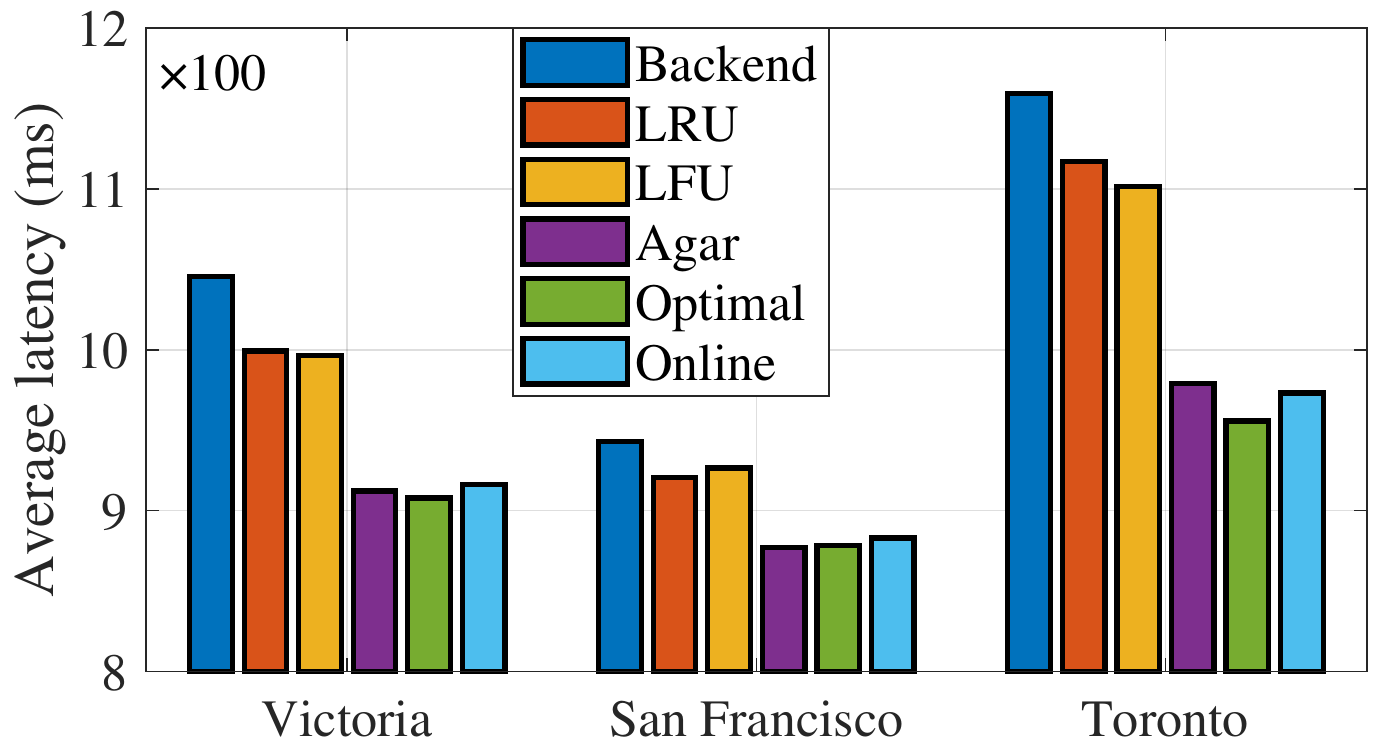}
		\caption{95th percentile tail latencies.}
		\label{fig:tail_latencies_case}
	\end{minipage}%
    \hspace{5pt}
	\begin{minipage}[b]{0.315\textwidth}
		\centering
		\includegraphics[width=2.35in]{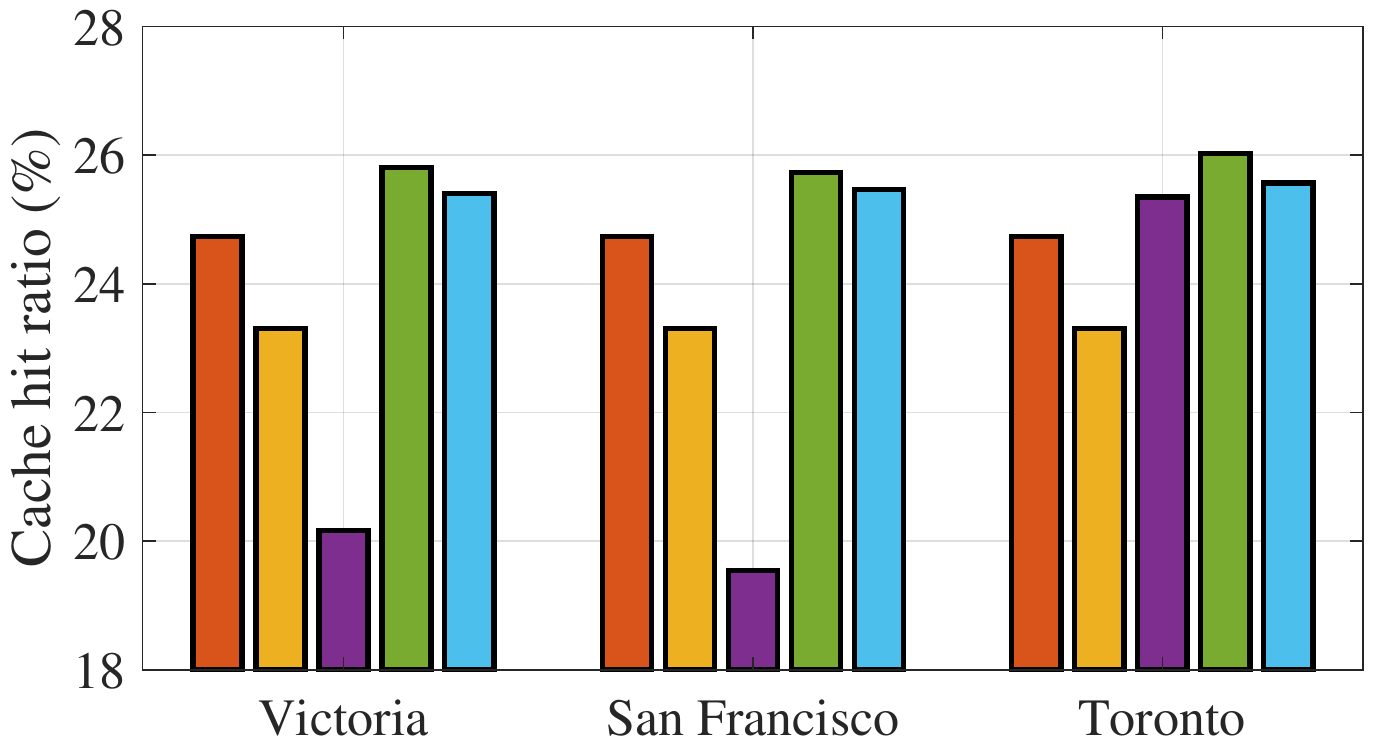}
		\caption{Hit ratio of data chunk requests.}
		\label{fig:Hit_ratio}
	\end{minipage}%
\end{figure*}

\section{Experimental Evaluation} \label{sec:Evaluation}

In this section, we build a prototype of the caching system in Python and then integrate it with the experiment platform deployed in Sec.~\ref{subsec:caching}.
Next, extensive experiments are performed to quantitatively evaluate the performance of the proposed optimal and online caching schemes.

\textbf{Experimental Setup:}
We deploy eighteen {\tt buckets} in $N = 6$ AWS regions.
Each {\tt bucket} denotes a remote storage server.
The library {\tt zfec}~\cite{zfec} is adopted to implement the RS codes.
By default, we set the number of coded chunks $K = 6$ and $R = 3$.
The coded chunks of $M=1,000$ data items are with the same size 1 MB~\cite{Agar_17}.
They are uniformly distributed at different {\tt buckets} to achieve fault tolerance.
As shown in Table~\ref{tab:AWS_Regions}, three frontend servers are built on personal computers in different Cities.
The hardware features an Intel(R) Core(TM) i7-7700 HQ processor and 16 GB memory.
The cache capacity of {\tt Memcached} is set to 100 MB in RAM, i.e., the maximum number of cached data chunks is set to $C = 100$.
The data service period $\mathcal{T}$ is set to 1 hour.
Similar to the previous studies~\cite{ECCache_16,Hu_SoCC17}, the popularity of data requests follows a Zipf distribution, which is common in many real-world data request distributions.
The tail index of the Zipf distribution is set to 2.0 under the default settings, i.e., highly skewed.

\textbf{Performance Baselines:} For a fair performance comparison, four other schemes are adopted as performance baselines.

\begin{itemize}

\item  Backend---All required $K$ data chunks are directly fetched from the remote {\tt buckets} with no caching.
This scheme is adopted to quantify the benefits of caching.

\item  LRU and LFU---The Least Recently Used (LRU) and Least Frequently Used (LFU) caching policies are used to replace the contents in the caching layer.
For each selected data item, all $K$ data chunks are cached.

\item Agar~\cite{Agar_17}---A dynamic programming-based scheme is designed to iteratively add data chunks with larger request rates and higher data access latencies in the caching layer.
\end{itemize}

\subsection{Experimental Results} \label{subsec:Results}

To begin with, the performances of six schemes, i.e., Backend, LRU, LFU, Agar, and the proposed optimal and online schemes, are compared under the default settings.
As all requested data chunks are fetched from the remote {\tt buckets}, Backend yields the highest average latency (746.27 ms, 726.43 ms, and 745.34 ms) and 95th percentile tail latencies (1045.49 ms, 943.12 ms, and 1159.16 ms) at three frontend servers.

LRU caches the recently requested data items by discarding the least recently used data items.
LFU caches the data items with higher request rates.
Compared with Backend, LRU and LFU reduce the average latencies of all data requests at three frontend servers by 24.6\% and 23.2\%, respectively.
As illustrated in Fig.~\ref{fig:Hit_ratio}, 24.7\% and 23.3\% of requests are fulfilled by the cached data chunks with LRU and LFU.
With the whole data items cached, LRU and LFU reduce the access latencies to 0 ms for 24.7\% and 23.3\% of data requests, respectively.
However, as the cache capacity is limited, the rest parts of the requests suffer high access latencies.
Compared with Backend, the 95th percentile tail latencies are only reduced by 3.5\% and 3.9\%, respectively.
LFU and LRU overlook the diversity of data chunk storage locations and the heterogeneity of latencies across different storage nodes.
Caching the whole data items cannot enjoy the full benefits of caching.

Agar iteratively improves the existing caching configurations by considering new data chunks.
Each data item is assigned with a weight, given by the number of data chunks to cache.
Compared with LFU and LRU, more data items can enjoy the benefits of caching.
The average latencies at three frontend servers are reduced to 504.21 ms, 503.42 ms, and 512.27 ms, respectively.
Moreover, Agar prefers to evict low valuation data chunks which incur high access latencies from the caching layer.
The 95th percentile tail latencies are reduced to 912.24 ms, 877.02 ms, and 978.97 ms.

With an overall consideration of data request rates and access latencies, the proposed optimal scheme optimizes the number of cached data chunks for each data item, minimizing the average latencies to 439.06 ms, 421.52 ms, and 479.79 ms.
Fig.~\ref{fig:Average_latency_case} shows that the proposed online scheme approximates the optimal scheme well with a similar latency of 450.95 ms, 431.66 ms, and 488.84 ms.
As shown in Table~\ref{tab:Impact_of_C}, although raising the average latency by 2.3\%, the online scheme greatly reduces the computation overheads.
Furthermore, the proposed optimal and online schemes optimize the caching decisions by selecting the data chunks with higher valuations.
This means the contents in the caching layers are with higher data request rates and lower access latencies.
The hit ratios of data requests from three frontend servers are 25.8\%, 25.7\%, and 26.0\%, respectively.
The 95th percentile tail latencies with the optimal scheme are reduced to 907.86 ms, 878.11 ms, and 955.65 ms.
The online scheme incurs a similar 95th percentile tail latencies of 916.45 ms, 883.0 ms, and 973.13 ms.

\subsection{Impact of Other Factors} \label{subsec:Factors}

\begin{figure*}[!t]
	\centering
	\begin{minipage}[b]{0.315\textwidth}
		\centering
		\includegraphics[width=2.3in]{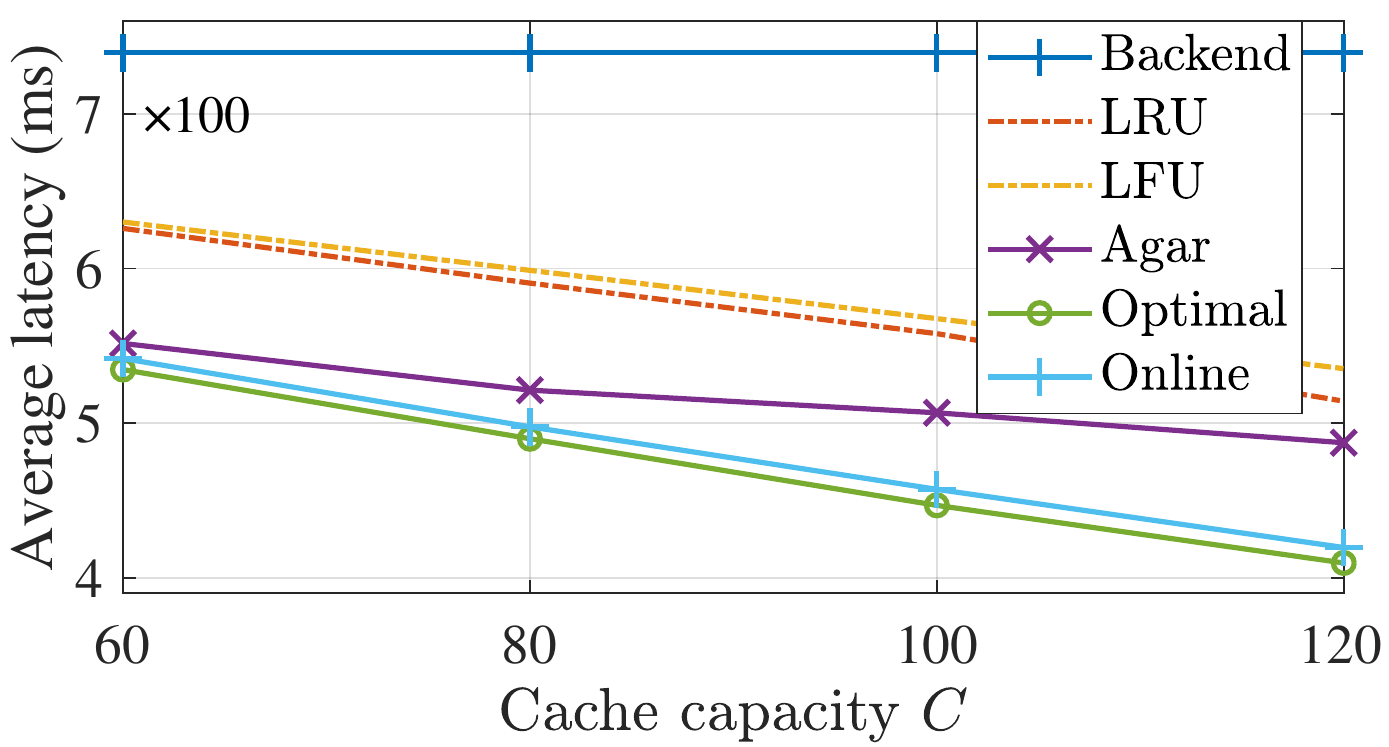}
		\caption{Impact of cache capacity.}
		\label{fig:Cache_capacity}
	\end{minipage}
	\hspace{5pt}
	\begin{minipage}[b]{0.315\textwidth}
		\centering
		\includegraphics[width=2.3in]{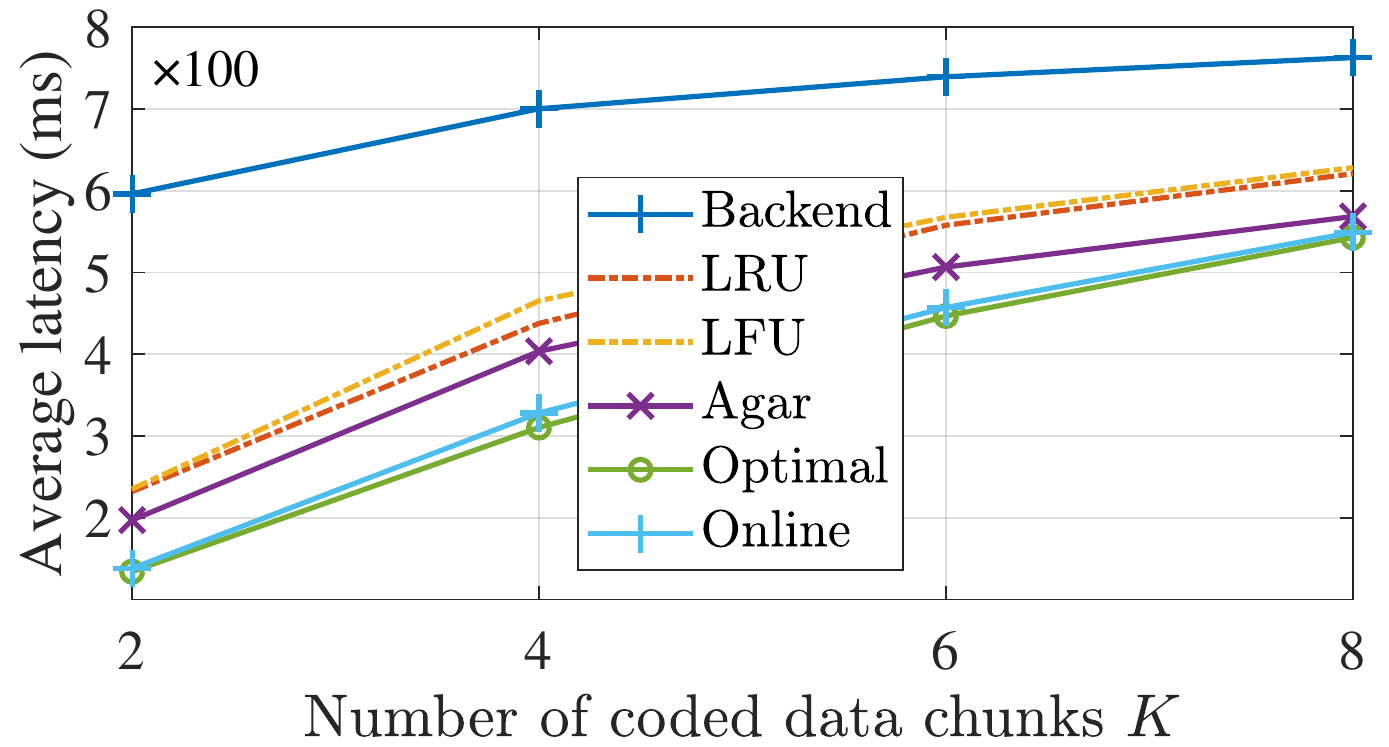}
		\caption{Impact of number of coded data chunks.}
		\label{fig:EC_num}
	\end{minipage}%
    \hspace{5pt}
	\begin{minipage}[b]{0.315\textwidth}
		\centering
		\includegraphics[width=2.3in]{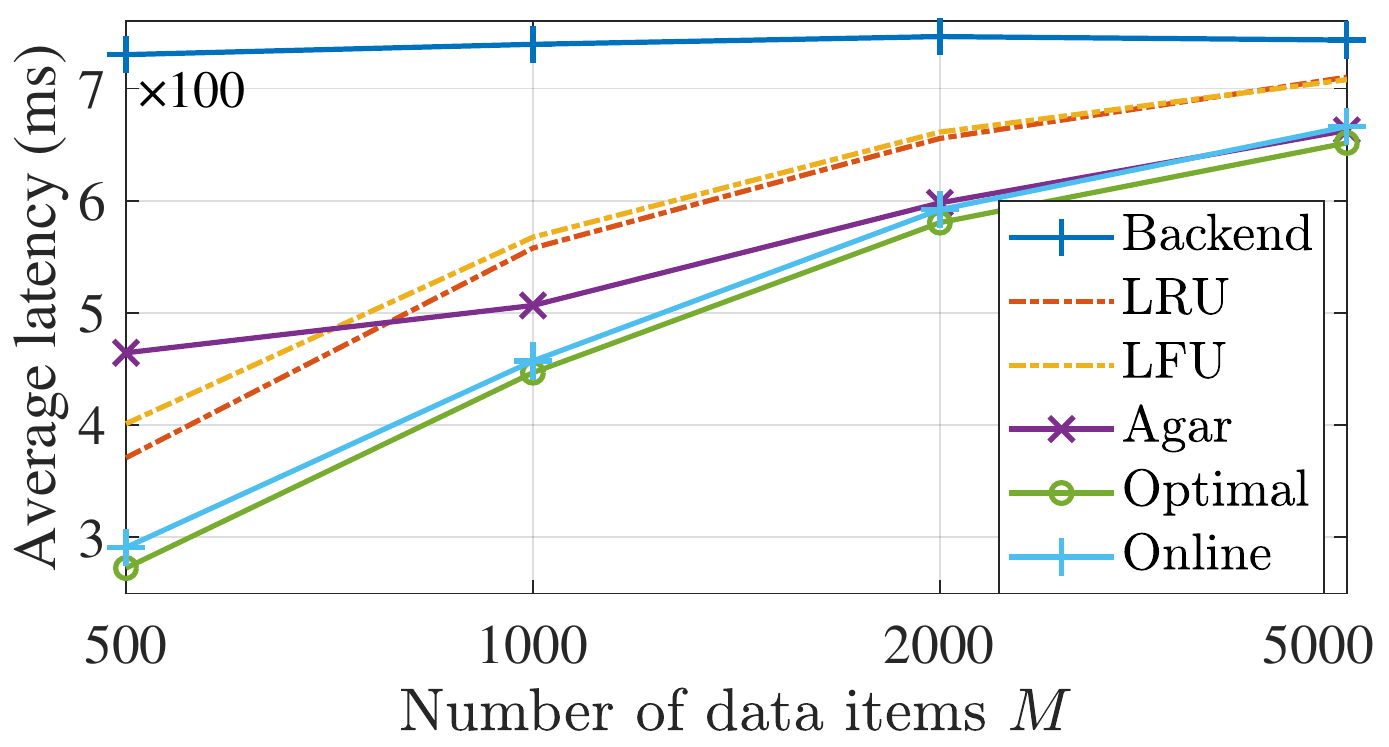}
		\caption{Impact of number of data items.}
		\label{fig:Impact_M}
	\end{minipage}%
\end{figure*}

\begin{table}[!t]
\caption{The number of cache partitions and the ART of caching schemes with the variation of cache capacity and number of coded data chunks.}
\begin{center}
\begin{tabular}{|c|c|c|c|c|c|}
\hline
\multicolumn{2}{|c|}{\textbf{$C$}} & \textbf{60} & \textbf{80} & \textbf{100} & \textbf{120}\\
\hline
\textbf{LRU} & {ART (ms)} & \multicolumn{4}{|c|}{0.05}\\
\hline
\textbf{LFU} & {ART (ms)} & \multicolumn{4}{|c|}{0.05}\\
\hline
\textbf{Agar} & {ART (ms)} & 544.48 & 661.37 & 800.02 & 960.97\\
\hline
\multirow{2}*{\textbf{Optimal}} & $\left | \chi \right |$ & 19,858 & 69,624 & 189,509 & 436,140\\
\cline{2-6}
& {ART (s)} & 391.64 & 2239.67 & 7863.99 & 23746.93\\
\hline
\multirow{2}*{\textbf{Online}} & $\max \left | \hat{\chi} \right |$ & \multicolumn{4}{|c|}{37} \\
\cline{2-6}
& {ART (ms)} & 2.29 & 1.93 & 1.71 & 1.55\\
\hline
\hline
\multicolumn{2}{|c|}{\textbf{$K$}} & \textbf{2} & \textbf{4} & \textbf{6} & \textbf{8} \\
\hline
\textbf{LRU} & {ART (ms)} & \multicolumn{4}{|c|}{0.05}\\
\hline
\textbf{LFU} & {ART (ms)} & \multicolumn{4}{|c|}{0.05}\\
\hline
\textbf{Agar} & {ART (ms)} & 421.74 & 628.31 & 800.02 & 948.22\\
\hline
\multirow{2}*{\textbf{Optimal}} & $\left | \chi \right |$ & 51 & 8,037 & 189,509 & 1,527,675 \\
\cline{2-6}
& ART (s) & 2.04 & 666.27 & 7863.99 & 49662.91 \\
\hline
\multirow{2}*{\textbf{Online}} & $\max \left | \hat{\chi} \right |$ & 2 & 9 & 37 & 127 \\
\cline{2-6}
& ART (ms) & 0.28 & 0.90 & 1.71 & 6.22 \\
\hline
\end{tabular}
\label{tab:Impact_of_C}
\end{center}
\end{table}

In this section, the impacts of cache capacity, number of coded data chunks, number of data items, data popularity, and server failure, are considered for performance evaluation.
For simplicity, the average latency represents the average latency of all data requests from three frontend servers in the following of the paper.

\textbf{Cache Capacity:} Fig.~\ref{fig:Cache_capacity} illustrates the average latency when the cache capacity $C$ increases from 60 to 120 chunks.
With no data caching, the average latencies of Backend remain stable at 739.35 ms.
With more data requests enjoy the caching benefits, the average latencies with all five caching schemes decrease.
With the increase of $C$, the proposed schemes have more space for caching decision optimization.
Compared with Agar, the percentage of reduced latency via the proposed optimal scheme is improved from 3.1\% to 16.0\%.
As shown in Table~\ref{tab:Loss}, when compared with the optimal scheme, the online scheme only increases the average latency from 1.3\% to 2.4\% with the variation of cache capacity.

Then, the ART of five caching schemes is evaluated, which determines the efficiency of deploying a caching solution.
As shown in Table~\ref{tab:Impact_of_C}, by using simple heuristics, LRU and LFU only need 0.05 ms to update the caching decision.
Agar periodically optimizes the caching configuration for all data items in the storage system, which needs hundreds of milliseconds for a round of optimization.
With the increase of cache capacity, the number of cache partitions $\left | \chi \right |$ increases rapidly from  19,858 to 436,140.
The ART of the optimal scheme increases from 391.64 s to 23746.93 s.
In contrast, the online scheme updates the caching decision upon the arrival of each data request.
According to our design in Algorithm~\ref{Alg:OnlineCaching}, the maximum number of cache partitions $\left | \hat{\chi} \right |$ is determined by the number of coded data chunks $K$.
When a data request arrives, the caching decision will not be updated if the data item is already cached.
Therefore, the ART of the online scheme for each request decreases from 2.29 ms to 1.55 ms with the increase of cache capacity.
This means the online scheme is a scalable solution for a large-scale storage system.

\textbf{Number of Coded Data Chunks:} The size of data items is increased from 2 MB to 8 MB.
With the same size of coded chunks (1 MB), the number of coded data chunks $K$ increases from 2 to 8.
As coded chunks are uniformly distributed among remote {\tt buckets}, more data chunks will be placed at the {\tt buckets} with higher access latencies with the increase of $K$.
Therefore, as shown in Fig.~\ref{fig:EC_num}, with $C=100$ and $M=1,000$, the average data access latency with Backend increases from 596.06 ms to 762.83 ms.
Moreover, when the data item is coded into more data chunks, more requests are served by fetching data chunks from the remote {\tt buckets}.
The average latencies with all five caching schemes increase accordingly.
Fig.~\ref{fig:EC_num} shows that the proposed optimal and online schemes always incur lower latencies than Agar, LRU, LFU, and Backend.
Compared with Agar, the percentage of reduced latency via the online scheme varies from 29.9\% to 3.4\% with the increase of $K$.
Furthermore, Table~\ref{tab:Impact_of_C} shows that the ART of the online scheme only increases from 0.28 ms to 6.22 ms.
With the online scheme, few extra delays will be introduced to handle the intensive data requests.

\begin{table}[!t]
\caption{The percentage of increased data access latency incurred by the online scheme, i.e., performance loss, when compared with the optimal scheme.}
\begin{center}
\begin{tabular}{|c|c|c|c|c|c|}
\hline
\textbf{$C$} & \textbf{60} & \textbf{80} & \textbf{100} & \textbf{120}\\
\hline
\textbf{Performance loss (\%)} & 1.3  &  1.6  &  2.3 &  2.4 \\
\hline
\hline
\textbf{$K$} & \textbf{2} & \textbf{4} & \textbf{6} & \textbf{8}\\
\hline
\textbf{Performance loss (\%)} & 3.0 &   5.7  &  2.3  &  1.2 \\
\hline
\hline
\textbf{$M$} & \textbf{500} & \textbf{1,000} & \textbf{2,000} & \textbf{5,000}\\
\hline
\textbf{Performance loss (\%)} & 6.7 & 2.3 & 2.0 & 2.2\\
\hline
\hline
\textbf{Tail index} & \textbf{Uniform} & \textbf{1.2} & \textbf{1.5} & \textbf{2.0}\\
\hline
\textbf{Performance loss (\%)} & 2.2  &  2.2  &  1.7  &  2.3\\
\hline
\end{tabular}
\label{tab:Loss}
\end{center}
\end{table}

\begin{figure*}[!t]
	\centering
	\begin{minipage}[b]{0.315\textwidth}
		\centering
		\includegraphics[width=2.3in]{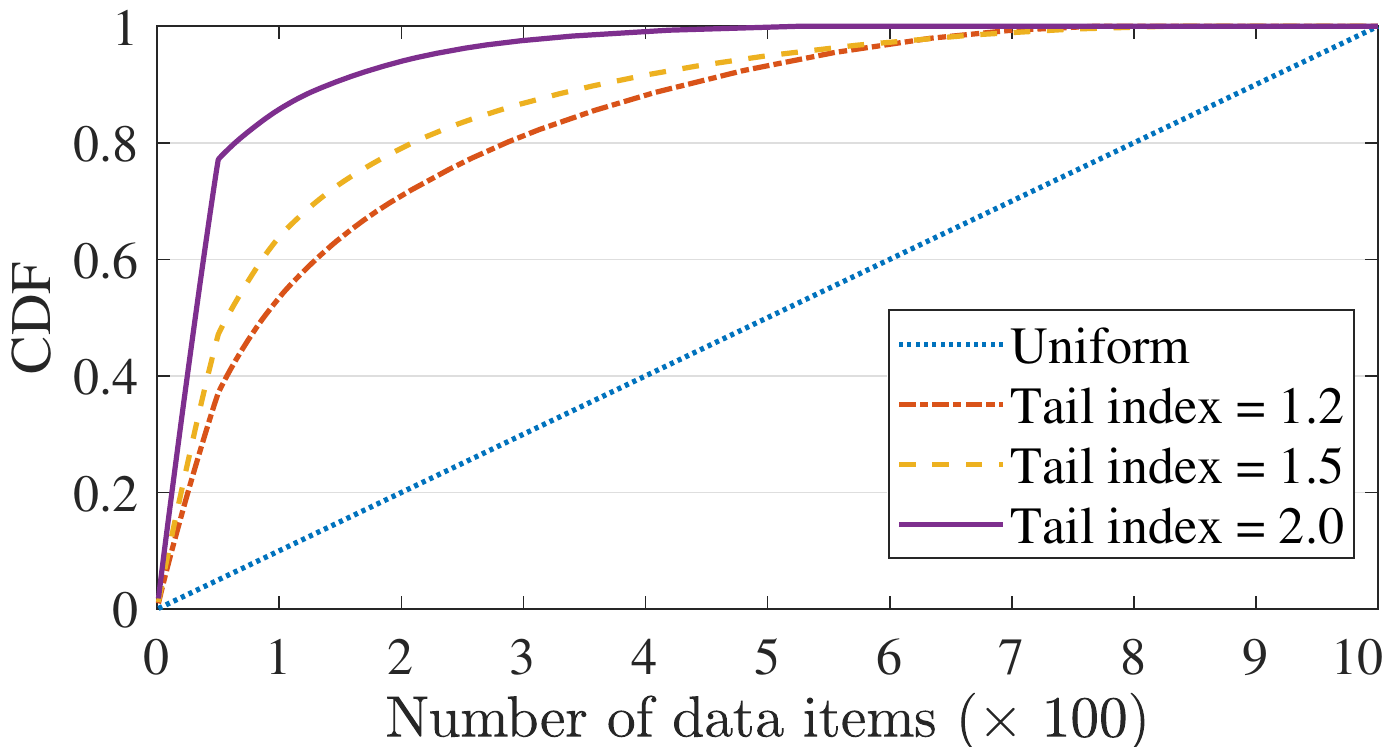}
		\caption{CDF of data popularity.}
		\label{fig:Skewness_CDF}
	\end{minipage}
	\hspace{5pt}
	\begin{minipage}[b]{0.315\textwidth}
		\centering
		\includegraphics[width=2.3in]{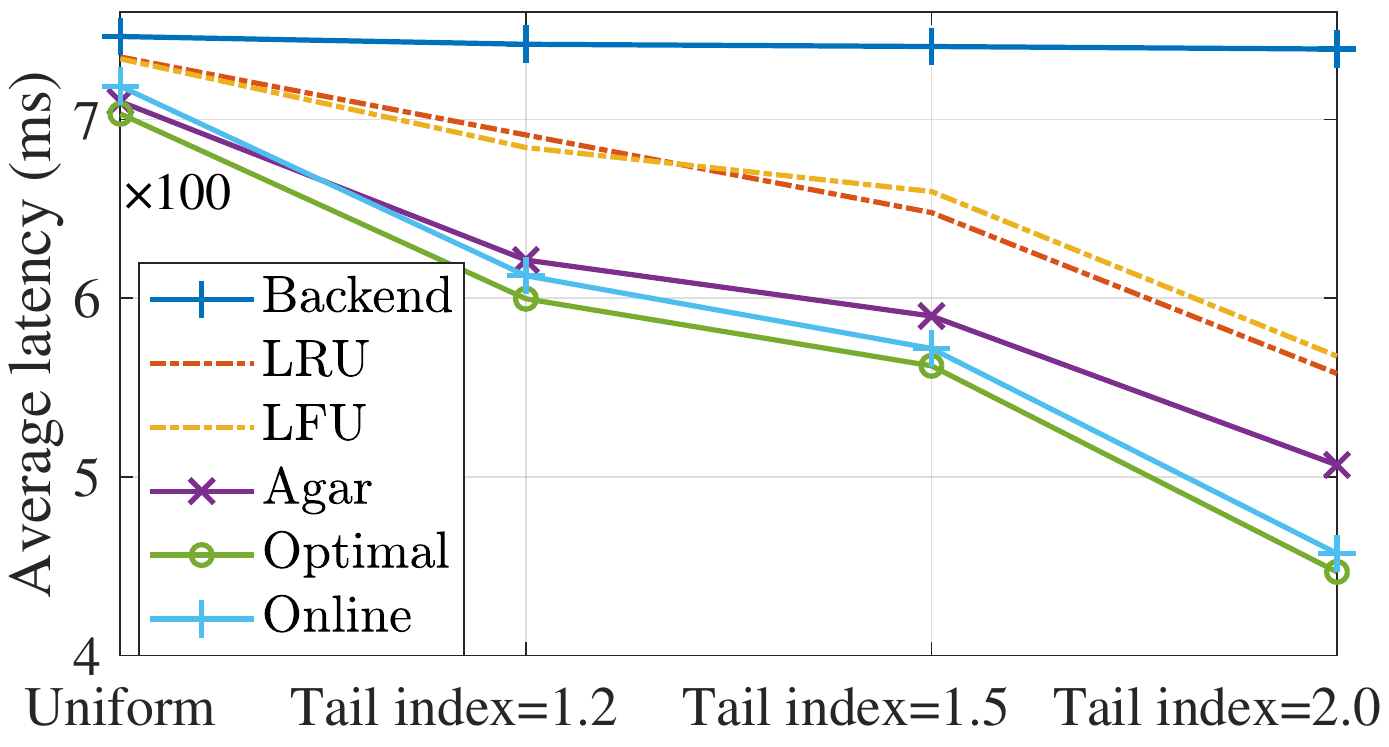}
		\caption{Impact of data popularity.}
		\label{fig:Skewness}
	\end{minipage}%
    \hspace{5pt}
	\begin{minipage}[b]{0.315\textwidth}
		\centering
		\includegraphics[width=2.3in]{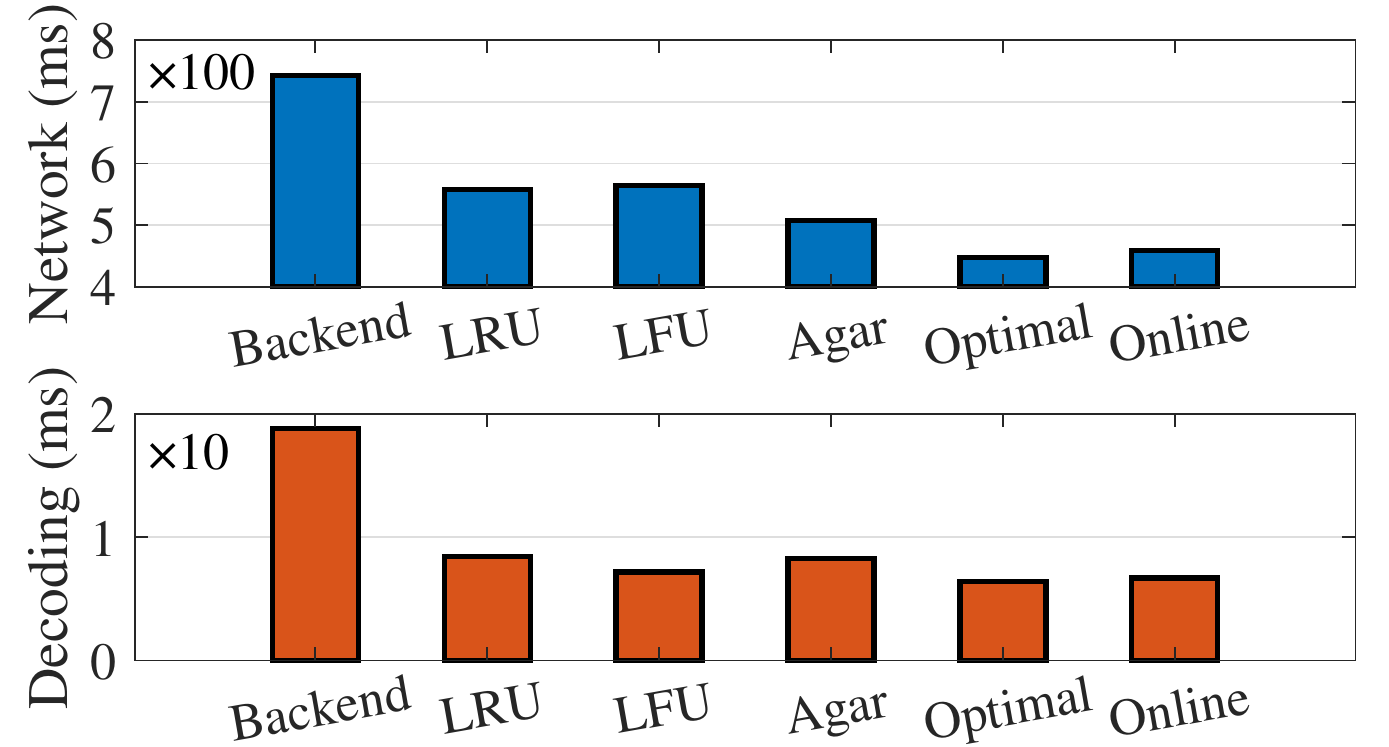}
		\caption{Impact of server failure.}
		\label{fig:Failure}
	\end{minipage}%
\end{figure*}

\textbf{Number of Data Items:} As shown in Fig.~\ref{fig:Impact_M}, with $C=100$ and $K=6$, the number of deployed data items $M$ is increased from 500 to 5,000.
The average data access latency with Backend remains basically the same.
As the data popularity follows Zipf distribution, a small portion of data items get the majority of data requests.
With the growing total number of data items, the number of data items with relatively higher request rates increases.
Due to the limited cache capacity, more and more data requests are served by fetching data chunks from the remote servers.
Therefore, the average latencies increase rapidly with LRU (from 370.93 ms to 710.07 ms), LFU (from 401.26 ms to 707.84 ms), Agar (from 464.45 ms to 662.62 ms), and the optimal (from 272.66 ms to 651.48 ms) and online (from 291.01 ms to 666.0 ms) schemes.

\textbf{Data Popularity:} Fig.~\ref{fig:Skewness_CDF} illustrates the CDF of the data popularity using uniform and Zipf distributions.
As shown in Fig.~\ref{fig:Skewness}, all six schemes incur similar data access latencies when the data popularity follows a uniform distribution.
When all data items are with the same popularity, the caching valuation is only determined by the storage locations of data items.
With a similar caching valuation for different data items, the benefits of caching are not significant when the cache capacity is limited.
Then, with the increase of tail index from 1.2 to 2.0, the skew of the data popularity becomes higher and higher.
A fraction of data items with higher request frequencies can benefit more from caching.
When the tail index is set to 2.0, compared with Backend, LRU, LFU, and Agar, the optimal scheme reduces the average latency by 39.6\%, 21.3\%, 19.9\%, and 11.8\%, respectively.
Furthermore, Table~\ref{tab:Loss} demonstrates that the online scheme can approximate the optimal scheme well.
When compared with the optimal scheme, the online scheme only increases the average latency by about 2\% under different settings of data popularity.

\textbf{Server Failure:} Then, we evaluate the performance of the proposed optimal and online schemes when server failure happens.
For a fair performance comparison, LRU, LFU, and Agar also cache the recovered data chunks (instead of parity chunks) to reduce the decoding overheads.
Please note that erasure codes can tolerate up to $R$ simultaneous server failures.
Recent research indicated that single server failure is responsible for 99.75\% of all kinds of server failures~\cite{Khan_FAST_12}.
Therefore, single server failure is considered in this paper by terminating each storage server in turn.
The experiment setting is identical to that in Sec.~\ref{subsec:Results} except for storage server failure.
If the needed data chunks are not available on the remote servers or cached in the caching layer, degraded read will be triggered to serve the data requests.
In this case, the data access latency contains two parts, i.e., the network latency and the decoding latency.

Fig.~\ref{fig:Failure} illustrates the average data access latencies with various schemes.
Without caching services, Backend incurs the average network latency of 742.81 ms and the average decoding latency of 18.82 ms.
By caching the recovered data chunks to avoid unnecessary decoding overheads of the subsequent data requests, LFU, LRU, Agar, and the proposed optimal and online schemes reduce the average decoding latencies by more than 55\%.
Compared with Backend, LRU, LFU, and Agar, the optimal scheme reduces the overall average data access latency by 40.4\%, 20.5\%, 19.7\%, and 12.0\%, respectively.
Furthermore, compared with the optimal scheme, the online scheme incurs a performance loss of 2.5\% in the presence of server failure.

\section{Conclusion and Future Work} \label{sec:conclusion}

In this paper, novel caching schemes were proposed to achieve low latency in the distributed coded storage system.
To reduce the data access latency, frontend servers, each with an in-memory caching layer, were deployed to cache coded data chunks near end users.
Experiments based on Amazon S3 confirmed the positive correlation between the latency and the physical distance of data retrieval over the WAN.
As the distributed storage system spans multiple geographical sites, the average data access latency was used to quantify the benefits of caching.
With the assumption of future data popularity and network latency information, an optimal caching scheme was proposed to obtain the lower bound of data access latency.
Guided by the optimal scheme, we further designed an online caching scheme based on the measured data popularity and network latencies in real time.
Extensive experiments demonstrated that the online scheme approximates the optimal scheme well and significantly reduces the computation complexity.
In future work, more performance metrics, e.g., load balance among storage nodes, will be considered in the caching problem.

%
%
%

\ifCLASSOPTIONcaptionsoff
  \newpage
\fi

\end{document}